\documentclass[11pt]{article}
\usepackage[latin9]{inputenc}
\usepackage[letterpaper]{geometry}
\geometry{verbose}
\usepackage{xcolor}
\usepackage{amsmath}
\usepackage{amssymb}
\usepackage{esint}
\PassOptionsToPackage{normalem}{ulem}
\usepackage{ulem}

\makeatletter

\providecolor{lyxadded}{rgb}{0,0,1}
\providecolor{lyxdeleted}{rgb}{1,0,0}

\usepackage{amsthm}
\usepackage{amsfonts}
\usepackage{amsthm}
\usepackage{mathrsfs}
\usepackage{subfigure}

\numberwithin{equation}{section}

\let\textquotedbl="
\newtheorem{theorem}{Theorem}

\newtheorem{lm}[theorem]{Lemma}

\newtheorem{prp}{Proposition}

\title{Slowly changing potential problems in Quantum Mechanics: Adiabatic Theorems, Ergodic Theorems, and Scattering}
\author{S. Fishman and A. Soffer}
\numberwithin{equation}{section}

\makeatother

\begin{document}
\maketitle

\noindent \textbf{Abstract}

We employ the recently developed multi-time scale averaging method
to study the large time behavior of slowly changing (in time) Hamiltonians.
We treat some known cases in a new way, such as the Zener problem,
and we give another proof of the Adiabatic Theorem in the gapless
case. We prove a new Uniform Ergodic Theorem for slowly changing unitary
operators. This theorem is then used to derive the adiabatic theorem,
do the scattering theory for such Hamiltonians, and prove some classical
propagation estimates and Asymptotic Completeness.

\section{Introduction}

The dynamics of a system perturbed by a time-dependent potential is
difficult to analyze in general terms. This problem shows up in many
fields as adiabatic processes in Quantum Mechanics, as effective theories
in scattering and nonlinear dynamics, in noisy and/or periodically
perturbed systems, and more.

A ``special'' case of fundamental importance is the possibility
of identifying a small parameter that controls the rate of change
of the system/perturbation. The most celebrated example is the \textit{adiabatic
theorem} in QM e.g., \cite{HJ,ASY}, where one assumes that the perturbing
potential, $W$, changes with time as 
\begin{equation}
W(x,\varepsilon t)
\end{equation}
where $\epsilon$ is assumed small, \textit{and} the time dependence
vanishes after some finite time, that is $W\left(x,\epsilon t)\right)=W\left(x,\infty)\right)$
for $\forall t>t_{\infty}$ typically $t_{\infty}$ is of order $1/\varepsilon$.
However, many other examples exist where the slow change never really
stops, or continues for a time much larger than $1/\varepsilon$.
In this case, the standard adiabatic theorem fails \cite{Ber1,Ber2}

In this work we will study three different examples. The Landau-Zener(LZ)
type Hamiltonian, the gapless adiabatic theorem in Quantum Mechanics
and scattering theory for some slowly changing potentials.

The LZ model is discussed in section 3. We study in detail the behavior
in time for small, medium and large times, using our new multi-scale
time averaging method \cite{FS}. The method we use is of general
nature, can be applied in a similar way to the general form of such
Hamiltonian.

In particular we show the time scales on which the LZ Hamiltonian
has a nontrivial action, with error estimates. Then, in section 4
we consider the adiabatic theorem in Quantum mechanics. First we prove
a uniform adiabatic ergodic theorem. That is we show that for Hamiltonians
which change slowly in time, the RAGE theorem holds. See theorem 4. 

Then it is applied to give a new proof of the gapless adiabatic theorem.

In section 5 we consider the scattering problem on a short-range slowly
changing(i.e adiabatic) potentials.

We prove some basic propagation estimates and then for a class of
potentials, we show that the limits defining the scattering matrix
are uniform in $\epsilon\rightarrow0$ provided the potential perturbation
$V\left(x,\epsilon t\right)=V\left(x,\infty\right)$ for $t\geq\text{c}exp\left(\epsilon^{-1/4}\right)$

An important example is the Landau-Zener type Hamiltonians, the simplest
of which is the $2\times2$ case given by

\begin{eqnarray}
H(t)=\left(\begin{array}{ll}
\varepsilon t & B\\
B & -\varepsilon t
\end{array}\right)\ \text{acting on }\mathbf{C}^{2}.
\end{eqnarray}

In this work, we study the problem of slowly changing potentials,
with or without the assumption of finite ``life-time''. We consider
in detail the LZ model: our aim is to prove that it is asymptotically
stable/complete, in the sense that the asymptotic Hamiltonian is an
explicit, time-independent operator. Moreover, we find the \textit{interaction
time} \cite{Vit1,Vit2,FMB} that is, the time after which the dynamics
is time-independent up to \textit{explicitly given} small correction.

We also analyze the short time behavior of the system. It turns out
to be oscillatory. We then analyze other problems that can be treated
by the same approach: scattering theory and the adiabatic case. For
this, we prove a new Uniform Ergodic Theorem.

The method we employ is a multiscale time averaging technique, recently
developed by us \cite{FS}. In this approach, we study by time-averaging,
the dynamics given by 

\begin{equation}
i\frac{\partial\psi}{\partial t}=\beta A(t)\psi
\end{equation}
where $|\beta|\ll1$, $A(t)$ is a family of self-adjoint operators
acting in a Hilbert space ${\cal H}$ with initial condition $\psi(t=0)=\psi\in{\cal H}$.

We show that by averaging over a time interval of order $\frac{1}{\sqrt{\beta}}$,
followed by a normal form transformation, the above problem is replaced
by a new, piecewise time-independent problem generated by the averaged
Hamiltonian, and a correction generated by a system as above, but
with $A(t)\rightarrow\tilde{A}(t)$ and $\beta\rightarrow C\beta^{\frac{3}{2}}$.
This process can be redone repeatedly, to all orders in $\beta$.

\section{Slowly changing potentials and time averaging\label{sec:Slowly-changing-potentials}}


Consider the problem 
\begin{align}
i\frac{\partial\psi}{\partial t} & =\left(-\Delta+W(\varepsilon t,x)\right)\psi\label{eq:1.1}\\
 & \psi(t=0)=\psi_{0}\in L_{2}\left(\mathbb{R}^{n}\right),\nonumber 
\end{align}
where $W(\varepsilon t,x)$ is a smooth bounded function such that
for some $\sigma>\sigma_{0}>1$, $0<\varepsilon<<1$ 
\begin{align}
\sup_{t}\left|\left|\langle x\rangle^{\sigma}W(\varepsilon t,x)\right|\right|_{\mathrm{L}^{\infty}}<C_{0}<\infty\label{eq:1.2}
\end{align}
Here 
\begin{equation}
\langle x\rangle^{2}=1+|x|^{2}.
\end{equation}

We are interested in the behavior of the solutions of such an equation,
for $t>1$. We assume 
\begin{equation}
\sup_{\tau}\left|\left|\langle x\rangle^{\sigma}\frac{\partial W(\tau,x)}{\partial\tau}\right|\right|_{\mathrm{L}^{\infty}}<C_{1}<\infty.\label{eq:1.3}
\end{equation}
The solution exists for all times, in $\mathrm{L}^{2}$. See Reed-Simon
\cite{RSI}. Traditionally, this problem is treated by the adiabatic
theory. Here, we will show how to treat this and similar problems
by multiscale time averaging techniques. We will use the multiscale
time-averaging of \cite{FS}. Let $T_{0}=1/\sqrt{\varepsilon}$.

Define 
\begin{equation}
\overline{W}(x)=\frac{1}{T_{0}}\int_{0}^{T_{0}}W(\varepsilon t,x)\ \mathrm{d}t=\frac{1}{\varepsilon T_{0}}\int_{0}^{\varepsilon T_{0}}W(\tau,x)\ \mathrm{d}\tau
\end{equation}
Then 
\begin{equation}
\begin{aligned}W(\varepsilon t & ,x)-\overline{W}(x)=\frac{-1}{\varepsilon T_{0}}\int_{0}^{\varepsilon T_{0}}\left[W(\tau,x)-W(\varepsilon t,x)\right]\mathrm{d}\tau\\
 & =\frac{-1}{\varepsilon T_{0}}\int_{0}^{\varepsilon T_{0}}W^{\prime}(y(\tau),x)(\tau-\varepsilon t)d\tau\\
\ \ y=y(\tau),
\end{aligned}
\label{eq:1.4}
\end{equation}
where we used that $\left.W(\tau,x)-W(t,x)=\frac{\partial W}{\partial y^{\prime}}(y^{\prime},x)\right|_{y^{\prime}=y(\tau)}\left(\tau-t\right)$
for some $y(\tau)\in\left[\tau,t\right]$. For $t\leq T_{0}$, one
has $\left|\tau-\varepsilon t\right|\leq\varepsilon T_{0}$, and therefore,
we have: 

\noindent \begin{lm}
\begin{equation}
\sup_{t\leq T_{0}}\sup_{x}\left|W(\varepsilon t,x)-\overline{W}(x)\right|\langle x\rangle^{\sigma}\leq C_{1}\varepsilon T_{0}=C_{1}\sqrt{\varepsilon}.\label{eq:1.5}
\end{equation}
Hence, if we let (see \cite{FS}) 
\begin{equation}
\overline{W}_{j}(x)=\frac{1}{T_{0}}\int_{jT_{0}}^{(j+1)T_{0}}W(\varepsilon t,x)\ \mathrm{d}t,\label{eq:1.6}
\end{equation}
then for $jT_{0}\leq t\leq(j+1)T_{0}$ 
\begin{equation}
H(\varepsilon t)-\left(-\Delta+\overline{W}_{j}(x)\right)=\mathcal{O}\left(\langle x\rangle^{-\sigma}\right)\sqrt{\varepsilon}.\label{eq:1.7}
\end{equation}

\end{lm}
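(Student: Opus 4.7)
The plan is to convert the mean-value representation already displayed in \eqref{eq:1.4} into the uniform bound \eqref{eq:1.5} by invoking hypothesis \eqref{eq:1.3}. The latter gives $|W^{\prime}(y(\tau),x)| \leq C_{1}\langle x\rangle^{-\sigma}$ uniformly in $\tau$, so inserting this bound into \eqref{eq:1.4} yields
\begin{equation*}
|W(\varepsilon t,x)-\overline{W}(x)|\langle x\rangle^{\sigma} \leq \frac{C_{1}}{\varepsilon T_{0}}\int_{0}^{\varepsilon T_{0}}|\tau - \varepsilon t|\,d\tau .
\end{equation*}
For $t \leq T_{0}$ one has $\varepsilon t \in [0,\varepsilon T_{0}]$, so $|\tau - \varepsilon t| \leq \varepsilon T_{0}$ throughout the integration range. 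Hence the integral is bounded by $(\varepsilon T_{0})^{2}$, and the right-hand side collapses to $C_{1}\varepsilon T_{0} = C_{1}\sqrt{\varepsilon}$ by the choice $T_{0}=1/\sqrt{\varepsilon}$. Taking $\sup_{x}$ and $\sup_{t\leq T_{0}}$ produces \eqref{eq:1.5}.

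Next I would observe that the derivation is translation-invariant in $t$: the same identity as in \eqref{eq:1.4}, with $[0,\varepsilon T_{0}]$ replaced by $[j\varepsilon T_{0},(j+1)\varepsilon T_{0}]$, gives
\begin{equation*}
W(\varepsilon t,x) - \overline{W}_{j}(x) = -\frac{1}{\varepsilon T_{0}}\int_{j\varepsilon T_{0}}^{(j+1)\varepsilon T_{0}} W^{\prime}(y(\tau),x)(\tau - \varepsilon t)\,d\tau
\end{equation*}
for $t \in [jT_{0},(j+1)T_{0}]$. On this slab, $\varepsilon t$ again lies in the integration range, so $|\tau - \varepsilon t| \leq \varepsilon T_{0}$ on it, and the same chain of estimates delivers $|W(\varepsilon t,x) - \overline{W}_{j}(x)| \leq C_{1}\sqrt{\varepsilon}\langle x\rangle^{-\sigma}$. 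Since $H(\varepsilon t) = -\Delta + W(\varepsilon t,x)$, the difference $H(\varepsilon t) - (-\Delta + \overline{W}_{j}(x))$ is the multiplication operator with symbol $W(\varepsilon t,x) - \overline{W}_{j}(x)$, and \eqref{eq:1.7} follows immediately.

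There is no substantive obstacle; essentially all the work has been done in the display preceding the statement. The only point requiring attention is that the derivative bound from \eqref{eq:1.3} is uniform in $\tau$, which is what permits pulling it outside the $\tau$-integral; this is precisely what the hypothesis provides. Everything else is the elementary observation that $|\tau - \varepsilon t| \leq \varepsilon T_{0}$ on each slab, together with the algebra $\varepsilon T_{0} = \sqrt{\varepsilon}$ that isolates the $\sqrt{\varepsilon}$ gain generated by the slow time-variation of $W$.
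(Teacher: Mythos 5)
Your argument reproduces exactly the computation the paper sketches in the displays \eqref{eq:1.4} and the sentence immediately preceding the lemma: apply the mean-value form, invoke \eqref{eq:1.3} for the uniform weighted derivative bound, use $|\tau-\varepsilon t|\leq\varepsilon T_0$ on the slab, and translate in $j$ for \eqref{eq:1.7}. It is correct and follows the same route as the paper.
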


Let 
\begin{equation}
\widetilde{A}(t)=\left(H(\varepsilon t)-H_{0}^{(g)}(\varepsilon t)\right)\varepsilon^{-1/2}\label{eq:1.8}
\end{equation}
where 
\begin{equation}
H_{0}^{(g)}(\varepsilon t)=-\Delta+\overline{W}_{j}(x)\text{ for }jT_{0}\leq t\leq(j+1)T_{0}.\label{eq:1.9}
\end{equation}
Then, by~\eqref{eq:1.7}, $||\langle x\rangle^{\sigma}\tilde{A}(t)||_{L^{2}\cap L^{\infty}}\equiv\left|\left|\langle x\rangle^{\sigma}\tilde{A}(t)\right|\right|_{L^{2}}+\left|\left|\langle x\rangle^{\sigma}\tilde{A}(t)\right|\right|_{L^{\infty}}=C(C_{0}+C_{1})\leq O(1)$,
where $C$ is a universal constant independent of the Hamiltonian.
Therefore, we can rewrite (\ref{eq:1.1}) in the form 
\begin{equation}
i\frac{\partial\psi}{\partial t}=\left[\varepsilon^{1/2}\widetilde{A}(t)+H_{0}^{(g)}(\varepsilon t)\right]\psi(t).\label{eq:1.10}
\end{equation}
This implies that \begin{subequations} 
\begin{align}
\psi(t) & =V(t)\widetilde{\psi}(t)\label{eq:1.11b}
\end{align}
with time ordering (see Eq. 2.3 \cite{FS} )
\begin{equation}
V(t)=\mathcal{T}e^{-i\int_{0}^{t}H_{0}^{(g)}(\varepsilon s)\ \mathrm{d}{s}}\label{eq:1.11bb}
\end{equation}
\end{subequations} and 
\begin{equation}
\left(\varepsilon^{1/2}\widetilde{A}(t)+H_{0}^{(g)}\right)\psi=i\frac{\partial\psi}{\partial t}=H_{0}^{(g)}(\varepsilon t)\psi(t)+V(t)i\frac{\partial\widetilde{\psi}}{\partial t}.
\end{equation}
The second equality is found by direct differentiation of (\ref{eq:1.11b})
Therefore 
\begin{equation}
V(t)i\frac{\partial\widetilde{\psi}}{\partial t}=\varepsilon^{1/2}\widetilde{A}(t)V(t)\widetilde{\psi}(t)
\end{equation}
or 
\begin{equation}
i\frac{\partial\widetilde{\psi}}{\partial t}=\varepsilon^{1/2}V(t)^{-1}\widetilde{A}(t)V(t)\widetilde{\psi}(t)\equiv\beta A(t)\widetilde{\psi}(t).\label{eq:1.12}
\end{equation}
We have used \ref{eq:1.11bb} which implies 
\begin{align}
\psi(t)=\mathcal{T}e^{-i\int_{0}^{t}H^{(g)}(\varepsilon s)\ \mathrm{d}s}\widetilde{\psi}(t)
\end{align}
and then by 
\begin{equation}
V(t)^{-1}\tilde{A}(t)V(t)\equiv A(t);\qquad\beta\equiv\sqrt{\varepsilon}.\label{eq:1.13}
\end{equation}
We now apply the multiscale-averaging of \cite{FS} to Equation~\eqref{eq:1.12}
where $A(t)$ plays the role of the Hamiltonian. By Theorem 3 there:
\begin{subequations} 
\begin{align}
\widetilde{\psi}(t) & \equiv U_{0}(t)U_{1}(t)\widetilde{U}_{2}^{-1}\phi_{2}^{U}(t)\label{eq:1.14a}\\
 & i\frac{\partial}{\partial t}\phi_{2}^{U}(t)=\beta^{3/2}A_{\mathrm{NF}}(t)\phi_{2}^{U}(t)\label{eq:1.14b}\\
 & \left|\left|A_{NF}(t)\right|\right|=\mathcal{O}(1).\label{eq:1.14c}
\end{align}
\end{subequations} 

Where $A_{NF}$ is obtained by a normal for transformation of the
potential averaging,$A_{NF}$ is denoted by $\tilde{A}$ and it is
given by \eqref{eq:1.13} and $\phi_{2}^{U}=\psi\left(0\right)$

Here $U_{0}(t)$ is generated by 
\begin{equation}
\begin{aligned}i\partial_{t} & U_{0}(t)=\beta\overline{A}_{0}^{g}(t)U_{0}(t)\\
\\
\\
\end{aligned}
\label{eq:1.15}
\end{equation}
where 
\begin{equation}
\overline{A}_{0}^{g}(t)\equiv\overline{A}_{0}^{(n)},\qquad\mathrm{for}\ nT_{0}\leq t<(n+1)T_{0}
\end{equation}
and 
\begin{equation}
\overline{A}_{0}^{(n)}\equiv\frac{1}{T_{0}}\int_{nT_{0}}^{(n+1)T_{0}}A(s)\ \mathrm{d}s,\qquad T_{0}=\beta^{-1/2}.
\end{equation}
$U_{1}(t)$ is generated by $\beta\overline{A}_{1}^{g}(t)$ with 
\begin{align}
 & \overline{A}_{1}^{(n)}\equiv\frac{1}{T_{0}}\int_{nT_{0}}^{(n+1)T_{0}}U_{0}^{-1}(t)\left[A(t)-\overline{A}_{0}^{g}(t)\right]U_{0}(t)\ \mathrm{d}t\label{eq:1.16}\\
 & U_{2}\equiv1+i\beta U_{1}^{-1}(t)\left(\int_{0}^{t}\left(A(t')-\overline{A}_{1}^{g}(t')\right)\ \mathrm{d}t\right)U_{1}(t)=O(\beta t);|t|\leq\frac{1}{\sqrt{\varepsilon}}.\label{eq:1.17}
\end{align}
Obtained by a model form transformation. 

Next to the leading order we get approximating $A_{NF}$ by 1 and
it implies that $\phi_{2}^{U}$ is a constant therefore using \eqref{eq:1.17}
and taking $U_{2}$ in the leading order implies 
\begin{equation}
\psi(t)=V(t)U_{0}(t)[U_{1}(t)\psi_{0}-U_{1}(t)\left(i\beta\int_{0}^{t}\left(A(t^{\prime})-A_{1}^{g}(t^{\prime})\right)dt^{\prime}\right)\psi(0)+O\left(\beta^{2}\right)]+O\left(\beta^{3/2}t\right)\label{eq:L2.25S}
\end{equation}
substituting $V\left(t\right)$, $U_{0}$ and $U_{1}$ implies 
\begin{equation}
\psi(t)=\mathcal{T}e^{-i\int_{0}^{t}H_{0}^{(g)}(t^{\prime})dt^{\prime}}\mathcal{T}e^{-i\beta\int_{0}^{t}\overline{A}_{0}^{g}(t^{\prime})dt'}\mathcal{T}e^{-i\beta\int_{0}^{t}\overline{A}_{1}^{g}(t^{\prime})dt^{\prime}}\left(U_{2}^{-1}\psi(0)\right)+O\left(\beta^{3/2}t\right)\label{eq:2.18aa}
\end{equation}

In order to simplify this, we use the following Lemma

\noindent \textbf{Lemma 2.2}

Let $A$ and $B$ be self adjoint. Then 
\begin{equation}
O\equiv\mathcal{T}e^{i\int_{0}^{t}A(s)ds}\mathcal{T}e^{i\int_{0}^{t}B(s)ds}=\mathcal{T}e^{+i\int_{0}^{t}C(t^{\prime})dt^{\prime}}\label{eq:L2.20}
\end{equation}
where 
\begin{equation}
C(t)=A(t)+\mathcal{T}e^{i\int_{0}^{t}A(s)ds}B(t)\mathcal{T}e^{-i\int_{0}^{t}A(s)ds}
\end{equation}

\noindent {Proof:} Differentiation of $O$ with respect to $t$
yields 
\begin{equation}
\frac{\partial}{\partial t}O=A(t)\mathcal{T}e^{i\int_{0}^{t}A(s)ds}\mathcal{T}e^{i\int_{0}^{t}B(s)ds}+\mathcal{T}e^{i\int_{0}^{t}A(s)ds}B(t)\mathcal{T}e^{i\int_{0}^{t}B(S)ds}=\left[A(t)+Te^{i\int_{0}^{t}A(s)ds}B(t)Te^{-i\int_{0}^{t}A(s)ds}\right]O\label{eq:2.23}
\end{equation}
identifying the term in the square brackets with $C(t)$ completes
the proof of the Lemma. \hfill{}\qedsymbol \\

\section{Landau Zener Majorana Example }

The Landau Zener problem is defined by the Schrödinger equation \cite{LZ1,LZ2,ASY,FMB,Vit1,Vit2,Wit,HJ,Ber1,Ber2}.
\begin{equation}
i\frac{\partial}{\partial t}\psi=H(\varepsilon t)\psi\label{eq:2.1-2}
\end{equation}
where $\psi$ is chosen to be a spinor and the Hamiltonian is 
\begin{equation}
H(\varepsilon t)=\begin{bmatrix}\varepsilon t & B\\
B & -\varepsilon t
\end{bmatrix}=\varepsilon t\sigma_{z}+B\sigma_{x}\label{eq:2.2-2}
\end{equation}
where $\sigma_{x},\sigma_{y}$ and $\sigma_{z}$ are the Pauli matrices.
This is a two state system (or its finite dimensional generalizations)
The adiabaticity parameter is $\epsilon$. Therefore in the present
work it will be assumed to be small. The time-averaged Hamiltonian
in the interval $nT_{0}\leq t\leq(n+1)T_{0}$ is given by 
\begin{equation}
\overline{H}_{n}=\begin{bmatrix}\varepsilon T_{0}\left(n+\frac{1}{2}\right) & B\\
B & -\varepsilon T_{0}\left(n+\frac{1}{2}\right)
\end{bmatrix}\label{eq:2.3-2}
\end{equation}
and in this interval $H_{0}^{(g)}$ of~\eqref{eq:1.9} takes this
value.

The propagator of~\eqref{eq:1.11b} is, for $nT_{0}<t<(n+1)T_{0}$,
given by 
\begin{equation}
V(t)=e^{-i\overline{H}_{N_{t}}\left(t\right)}f_{n-1}\cdots f_{1}f_{0}\label{eq:2.4}
\end{equation}
where 
\begin{equation}
f_{n}\equiv e^{-i\overline{H}_{n}T_{0}}.\label{eq:2.5}
\end{equation}

The effective Hamiltonian of Equation~\eqref{eq:1.13}, is 
\begin{equation}
A(t)=V(t)^{-1}\left[H(\varepsilon t)-H_{0}^{(g)}(t)\right]V(t).\label{eq:2.6}
\end{equation}

In the interval $N_{t}T_{0}<t<(N_{t}+1)T_{0}$ the term in the square
brackets is 
\begin{align}
H(\varepsilon t)-\overline{H}_{N_{t}} & =\begin{bmatrix}\varepsilon t & B\\
B & -\varepsilon t
\end{bmatrix}-\begin{pmatrix}\varepsilon T_{0}\left(N_{t}-\frac{1}{2}\right) & +B\\
B & -\varepsilon T_{0}\left(N_{t}-\frac{1}{2}\right)
\end{pmatrix}\\
 & =a_{N_{t}}(t)\begin{bmatrix}1 & 0\\
0 & -1
\end{bmatrix}=a_{N_{t}}(t)\sigma_{z}\label{eq:2.7}
\end{align}
where 
\begin{equation}
a_{N_{t}}(t)=t-\varepsilon T_{0}\left(N_{t}+\frac{1}{2}\right).\label{eq:2.8}
\end{equation}
since it must commute with $\sigma_{z}$ at time zero. Therefore
the equation for $\widetilde{\psi}(t)$ takes the form 
\begin{equation}
i\frac{\partial\tilde{\psi}}{\partial t}=\sqrt{\varepsilon}A(t)\tilde{\psi}=\sqrt{\varepsilon}a(t)\sigma_{z}\tilde{\psi}.\label{eq:7*}
\end{equation}
This equation can be easily solved, and one can see that it does not
produce any transition. Hence, the entire relevant dynamics is determined
by the piecewise constant (in time) averaged Hamiltonian.

We turn now to the calculation of $V(t)$ of (3.7) 

For this purpose we use the relation $T_{0}\varepsilon^{2}=1$ and
write 
\begin{align}
f_{n} & =e^{-i\left[\varepsilon T_{0}^{2}\left(n+\frac{1}{2}\right)\sigma_{z}+BT_{0}\sigma_{x}\right]}\ =e^{i\left[\sigma_{z}+\frac{\sigma_{x}}{\varepsilon T_{0}(n+\frac{1}{2})}\right]\left(n+\frac{1}{2}\right)}\label{eq:2.14}\\
\text{we denote }V(t=nT_{0}) & =V_{n}\label{eq:2.15}\\
V_{n+1} & =f_{n}V_{n}\label{eq:2.16}
\end{align}

A very useful formula is~\cite{LLQ} ,~p.~203 
\begin{equation}
f(a+\vec{b}\cdot\vec{\sigma})=\alpha+\vec{\beta}\cdot\vec{\sigma}\label{eq:2.17}
\end{equation}
where 
\begin{equation}
\alpha=\frac{1}{2}[f(a+b)+f(a-b)]
\end{equation}
\begin{equation}
\vec{\beta}=\frac{\vec{b}}{2b}[f(a+b)-f(a-b)].\label{eq:2.18}
\end{equation}
In our case $f\left(x\right)=e^{-ix}$and 
\begin{align}
a_{n}=0\quad{\rm {and}\ \vec{b_{n}}} & =\left(BT_{0},0,\varepsilon T_{0}^{2}\left(n+\frac{1}{2}\right)\right)\\
b_{n} & =T_{0}\sqrt{B^{2}+\varepsilon^{2}T_{0}^{2}\left(n+\frac{1}{2}\right)^{2}}\label{eq:3.13-shmuel}\\
\alpha_{n}=\cos b_{n}\quad\vec{\beta}_{n} & =-i\left(BT_{0},0,\varepsilon{T_{0}}\left(n+\frac{1}{2}\right)\right)\frac{\sin b_{n}}{b_{n}}\\
 & =\frac{-i\sin b_{n}}{T_{0}}\left(\frac{BT_{0}}{\sqrt{(B)^{2}+{\varepsilon^{2}T_{0}}^{2}\left(n+\frac{1}{2}\right)^{2}}},0,\frac{\varepsilon{T_{0}}^{2}\left(n+\frac{1}{2}\right)}{\sqrt{B^{2}+{\varepsilon^{2}T_{0}}^{2}\left(n+\frac{1}{2}\right)}}\right)\\
\\
\end{align}

\begin{align}
 & f_{n}=\cos b_{n}+\frac{-i\sin b_{n}}{b_{n}}\left(BT_{0}\cdot\sigma_{x}+\varepsilon{T_{0}}^{2}\left(n+\frac{1}{2}\right)\sigma_{z}\right)\label{eq:2.29}
\end{align}

Let us consider three different domains 
\begin{align}
\text{\ I }\quad T_{0} & \varepsilon n\gg B\label{eq:2.31}\\
\text{\ \ II }\quad T_{0} & \varepsilon n\sim B\label{eq:2.32}\\
\text{III }\quad T_{0} & \varepsilon n\ll B\label{eq:2.33}
\end{align}

First we obtain the expression for $f_{n,f_{n+1}},$where $f_{n}$is
given by \ref{eq:2.4} with 

We begin by recalling the definitions:

\begin{align}
 & \quad T_{0}\equiv\frac{1}{\sqrt{\varepsilon}}\\
b_{n} & =\epsilon T_{0}^{2}\left(n+\frac{1}{2}\right)\sqrt{1+\frac{B^{2}}{\epsilon^{2}T_{0}^{2}\left(n+\frac{1}{2}\right)^{2}}}
\end{align}
In domain $I$ 
\begin{align}
\frac{\varepsilon T_{0}\left(n+\frac{1}{2}\right)}{B}\gg1
\end{align}
Therefore 
\begin{equation}
b_{n}=\epsilon T_{0}^{2}+\frac{B^{2}}{2\epsilon\left(n+\frac{1}{2}\right)}+O\left(\frac{B^{2}}{\epsilon^{2}T_{0}^{2}n}\right)^{2}n
\end{equation}
For future use, we write $f_{n}$ in the following forms 
\begin{align}
f_{n} & \equiv e^{-i\left(\varepsilon{T_{0}}^{2}\left(n+\frac{1}{2}\right)\sigma_{z}+BT_{0}\sigma_{x}\right)}\\
 & =e^{-iT_{0}\left(\sqrt{\varepsilon}\left(n+\frac{1}{2}\right)\sigma_{z}+B\sigma_{x}\right)}\\
 & =e^{-iT_{0}\sqrt{\varepsilon}\left(n+\frac{1}{2}\right)\left[\sigma_{z}+\frac{B}{\sqrt{\varepsilon}\left(n+\frac{1}{2}\right)}\sigma_{x}\right]}\\
f_{n} & \equiv e^{-i\sqrt{\varepsilon}T_{0}\left(n+\frac{1}{2}\right)\left[\sigma_{z}+\varepsilon_{n}\sigma_{x}\right]}\\
 & \text{with }\epsilon_{n}=\frac{B}{\sqrt{\epsilon}\left(n+\frac{1}{2}\right)}\label{eq:10-shmuel}\\
\end{align}
$f_{n}$ of \ref{eq:2.29} can be written as 
\begin{equation}
f_{n}=\left[\cos b_{n}-\frac{i}{b_{n}}\sin b_{n}\varepsilon T_{0}^{2}\left(n+\frac{1}{2}\sigma_{z}\right)\right]\times\left[1-\left[\cos b_{n}-\frac{i}{b_{n}}\sin b_{n}T_{0}^{2}(n+\frac{1}{2}\right)\sigma_{z}\right]^{-1}\frac{i}{b_{n}}\sin b_{n}BT_{0}\sigma_{x}
\end{equation}
where $b_{n}$ is given by \eqref{eq:3.13-shmuel}. We define now
$b_{n}^{'}$ by 
\begin{equation}
e^{-ib'_{n}\sigma_{z}}\equiv\left[\cos b_{n}-\frac{i}{b_{n}}\sin b_{n}\varepsilon T_{0}^{2}\left(n+\frac{1}{2}\right)\sigma_{z}\right]
\end{equation}
then 
\begin{align}
f_{n} & =\overline{e}^{ib'_{n}\sigma_{z}}\left[I+e^{ib'_{n}\sigma_{z}}i\frac{BT_{0}}{b_{n}}\sin b_{n}\sigma_{x}\right]
\end{align}
where its main part is given by 
\[
F\left(b_{n}^{'}\right)=e^{-ib_{n}^{'}\sigma_{z}}
\]
 and $\epsilon_{n}$ is given by \ref{eq:10-shmuel}. note that $b'_{n}$
is defined through $f_{n}$ so that approximately for large $n$ $e^{-ib_{n}\sigma_{z}}$
or $b'_{n}\sim b_{n}$. The reason it can be done is that $b_{n}$
grows as $n$. We introduce the small correction. 
\begin{align}
e^{+ib'_{n}\sigma_{z}}\frac{BT_{0}}{b_{n}}\sin b_{n}\sigma_{x}\equiv\varepsilon_{n}^{X}.
\end{align}

We define $t_{n}=T_{0}\sqrt{\varepsilon}\left(n+\frac{1}{2}\right)$
and turn to the calculation of the product, using Lemma 2.2 (Eq \ref{L2.23},\ref{eq:L2.25S}
) one finds 

\begin{align}
f_{n}f_{n+1} & =e^{-it_{n}\left(\sigma_{z}+\varepsilon_{n}\sigma_{x}\right)}e^{-it_{n+1}\left(\sigma_{z}+\varepsilon_{n+1}\sigma_{x}\right)}\label{eq:fnfn1}\\
 & =e^{-it_{n}\left(\sigma_{z}+\varepsilon_{n}\sigma_{x}\right)}e^{-it_{n}\left(\frac{t_{n+1}}{t_{n}}\sigma_{z}+\frac{t_{n+1}}{t_{n}}\varepsilon_{n+1}\sigma_{x}\right)}\\
 & =\exp i\int_{0}^{t_{n}}C_{n}(s)ds
\end{align}
where 
\begin{equation}
C_{n}(s)=\sigma_{z}+\varepsilon_{n}\sigma_{x}+e^{-is\left(\sigma_{z}+\varepsilon_{n}\sigma_{x}\right)}\left[\frac{t_{n+1}}{t_{n}}\sigma_{z}+\frac{t_{n+1}}{t_{n}}\varepsilon_{n+1}\sigma_{x}\right]e^{+is\left(\sigma_{z}+\varepsilon_{n}\sigma_{x}\right)}
\end{equation}
Using 3.20 - 3.23 one finds
\begin{align}
e^{-is\left(\sigma_{z}+\varepsilon_{n}\sigma_{x}\right)} & =\frac{1}{2}e^{-is\sqrt{1+{\varepsilon_{n}}^{2}}}+\frac{1}{2}e^{is\sqrt{1+{\varepsilon_{n}}^{2}}}\\
 & -\left(i\sin s\sqrt{1+{\varepsilon_{n}}^{2}}\right)\frac{1}{\sqrt{1+{\varepsilon_{n}}^{2}}}\left(\sigma_{z}+\varepsilon_{n}\sigma_{x}\right)\\
 & =\cos s\sqrt{1+{\varepsilon_{n}}^{2}}-\frac{i}{\sqrt{1+{\varepsilon_{n}}^{2}}}\sin\sqrt{1+{\varepsilon_{n}}^{2}}s\left(\sigma_{z}+\varepsilon_{n}\sigma_{x}\right)
\end{align}
We turn to define $\varepsilon'_{n}$ and $\tilde{\varepsilon}_{n}$
\begin{align}
e^{-is\sqrt{1+{\varepsilon{}_{n}}^{2}}\sigma_{z}} & \left(I+e^{is\sqrt{1+{\varepsilon_{n}^{3}}^{2}}\sigma_{z}}\frac{-i}{\sqrt{1+{\varepsilon_{n}}^{2}}}\sin\sqrt{1+{\varepsilon_{n}}^{2}}\varepsilon_{n}\sigma_{x}\right)\label{eq:*12}\\
 & \equiv e^{-is\sqrt{1+{\varepsilon'_{n}}^{2}}\sigma_{z}}\left(I+\tilde{\varepsilon}_{n}(s)\sigma_{x}\right)\\
\nonumber 
\end{align}
and it satisfies 
\[
\epsilon_{n}^{'}\approx\epsilon_{n}
\]
We now turn to calculate the leading term with the help of \eqref{eq:*12}
\begin{align}
e^{-is\left(\sigma_{z}+\varepsilon_{n}\sigma_{x}\right)}\frac{t_{n+1}}{t_{n}}\sigma_{z}e^{is\left(\sigma_{z}+\varepsilon_{n}\sigma_{x}\right)}\\
=e^{-is\sqrt{1+{\varepsilon'_{n}}^{2}}\sigma_{z}}\left(I+\tilde{\varepsilon}_{n}\sigma_{x}\right)\left(\frac{t_{n+1}}{t_{n}}\right)\sigma_{z}\left(I-\tilde{\varepsilon}_{n}\sigma_{x}\right)e^{is\sqrt{1+{\varepsilon_{n}}^{2}}\sigma_{z}}.
\end{align}
Now \eqref{eq:fnfn1} takes takes form
\begin{equation}
f_{n}f_{n+1}=\exp\left[-i\int_{0}^{t_{n}}C_{n}(s)ds\right]
\end{equation}
with

\begin{align}
\\
C_{n}(s) & =\sigma_{z}+\varepsilon_{n}\sigma_{x}+\frac{t_{n+1}}{t_{n}}\sigma_{z}+\frac{t_{n+1}}{t_{n}}F\left(s\sqrt{1+{\varepsilon'_{n}}^{2}}\right)\varepsilon_{n+1}\sigma_{x}F^{\ast}\left(s\sqrt{1+{\varepsilon'_{n}}^{2}}\right)+\\
 & \quad F\left(s\sqrt{1+{\varepsilon'_{n}}^{2}}\right)\tilde{\varepsilon}_{n}(s)\frac{t_{n+1}}{t_{n}}\left(-2i\sigma_{y}\right)F^{\ast}\left(s\sqrt{1+{\varepsilon'_{n}}^{2}}\right)+O\left({\tilde{\varepsilon}_{n}}^{2}(s)\right)\sigma_{z}\\
 & =\left[1+\frac{t_{n+1}}{t_{n}}+O\left(\tilde{{\varepsilon}_{n}}^{2}\right)\right]\sigma_{z}+\varepsilon_{n}\sigma_{x}+iF\left(s\sqrt{1+{\varepsilon'_{n}}^{2}}\right)\tilde{\varepsilon}_{n}\sigma_{y}F^{\ast}\left(s\sqrt{1+{\varepsilon'_{n}}^{2}}\right)\left(\frac{-t_{n+1}}{t_{n}}\right)\\
 & \quad+F\left(s\sqrt{1+{\varepsilon'_{n}}^{2}}\right)\tilde{\varepsilon}_{n+1}\sigma_{x}\left(\frac{+t_{n+1}}{t_{n}}\right)F^{\ast}\left(s\sqrt{1+{\varepsilon'_{n}}^{2}}\right)
\end{align}
so, 
\begin{align}
f_{n}f_{n+1} & =\mathcal{T}\text{exp}\left[~^{(-i)\left[t_{n}+t_{n+1}+O_{n}(\varepsilon)\right]\sigma_{z}+i\frac{t_{n}\varepsilon_{n}}{t_{n}+t_{n+1}}\sigma_{x}\left(t_{n}+t_{n+1}\right)}\right.\\
 & \cdot(t_{n}+t_{n+1}+O(\varepsilon))\frac{t_{n+1}}{t_{n}\left(t_{n}+t_{n+1}+O_{n}\right)}\\
 & \left.\quad\int_{0}^{t_{n}}F\varepsilon_{n+1}\sigma_{x}F^{\ast}ds+2i\frac{t_{n+1}}{t_{n}}\int_{0}^{t_{n}}F\left(s\sqrt{1+{\varepsilon'_{n}}^{2}}\right)\sigma_{y}F^{*}\left(s\sqrt{1+{\varepsilon'_{n}}^{2}}\right)ds\right]\\
 & =\mathcal{T}\exp(-i\left\{ \left[t_{n}+t_{n+1}+O(\varepsilon_{n})\right]\left[\sigma_{z}+\frac{\varepsilon_{n}t_{n}}{{O}(\varepsilon_{n})+t_{n}+t_{n+1}}\sigma_{x}+\frac{t_{n+1}}{{O}(\varepsilon_{n})+t_{n}+t_{n+1}}\frac{1}{t_{n}}\int_{0}^{t_{n}}G(s)\cdots\right]\right\} )
\end{align}
where 
\begin{align}
G(s)=\cdots=\frac{1}{t_{n}}\int_{0}^{t_{n}}2iF\varepsilon_{n}(s)\sigma_{y}F^{*}ds+\frac{1}{t_{n}}\int_{0}^{t_{n}}F\varepsilon_{n+1}\sigma_{x}F^{\ast}ds.
\end{align}
The expansion has two parts, the integral one which is time dependent
part denoted by $G\left(s\right)$, the rest. Therefore the following
Lemme is useful 

\emph{Lemma}3.1. \smallskip{}

Let $U(t)=\mathcal{T}e^{-iAt-i\int_{0}^{t}B(s)ds}$ with $B(t)=U(t)BU^{\ast}(t)$.
Then $\frac{dU}{dt}=(-iA-i\beta(t))F$.

If $U=e^{-iAt}W(t)$ then 
\begin{align}
\dot{W} & =-ie^{iAt}B(t)e^{-iAt}W
\end{align}

In our case 
\begin{align}
A & =\sigma_{z}\left(1+\frac{t_{n+1}}{t_{n}}\right)a\left(O\left(\epsilon_{n}\right)+O\left(\tilde{\epsilon_{n}}\right)\right)\ ,\ B(t)=C_{n}F\left(t\sqrt{1+{\varepsilon'_{n}}^{2}}\right)\tilde{\varepsilon}_{n}(s)\sigma_{y}F^{\ast}\\
C_{n} & =\frac{-t_{n+1}}{t_{n}}\cdot\frac{1}{t_{n}}\\
 & \Rightarrow\dot{W}\approx iC_{n}\tilde{\varepsilon}_{n}(t)\sigma_{y}W.
\end{align}

Where we used that $e^{-At}F\approx1$, and it is understood that
$t=t_{n}$ is taken. Integrating, we get 
\begin{align}
W(t_{n}) & =e^{-iC_{n}\sigma_{y}\varepsilon_{n}\int_{0}^{t_{n}}\frac{1}{\sqrt{1+{\varepsilon_{n}}^{2}}}\sin\sqrt{1+{\varepsilon_{n}}^{2}}sds}\\
 & =e^{-i\frac{t_{n+1}}{t_{n}}\sigma_{y}\left(\cos\sqrt{1+{\varepsilon_{n}}^{2}}t_{n}-1\right)\frac{\varepsilon_{n}}{1+{\varepsilon_{n}}^{2}}\frac{1}{t_{n}}}.
\end{align}

The integral is bounded therefore its contribution will turn out to
be small. We arrived at 
\begin{align*}
f_{n}f_{n+1} & \sim\exp i\int_{0}^{t_{n}+t_{n+1}+O_{n}(\varepsilon)}\tilde{D}_{n}ds
\end{align*}
with 
\begin{align}
\tilde{D}_{n}\sim\sigma_{z}+\left(\frac{t_{n}}{t_{n}+t_{n+1}+O(\varepsilon_{n})}\varepsilon_{n}\sigma_{x}+\frac{t_{n+1}}{t_{n}+t_{n+1}+O_{n}(\varepsilon)}\varepsilon_{n+1}\sigma_{x}\right).
\end{align}
What we did actually is extended the domain of integration from $t_{n}$
to $t_{n}+t_{n+1}$ prefactor of the leading term was set to remain
the same while the corrections are different. Hence to the leading
order
\begin{equation}
f_{n}f_{n+1}\approx e^{-i\left(t_{n}+t_{n+1}\right)\sigma_{z}}
\end{equation}
 Iterating the process $\tilde{D_{n}}$ is replaced by $\tilde{D_{n+1}}$
so that The correction term decreases compared to the $\sigma_{z}$
term. Continuing the precess results in 
\begin{align}
f_{n}f_{n+1}f_{n+2} & =\exp i\int_{0}^{t_{n}+t_{n+1}+t_{n+2}+O_{n+1}(\varepsilon)}\tilde{D}_{n+1}(s)ds\quad\tilde{D}_{n+1}(s)\sim\\
 & \sigma_{z}+\sigma_{x}\left(\frac{t_{n}}{t_{n}+t_{n+1}+t_{n+2}+O_{n+1}(\varepsilon)}\varepsilon_{n}+\frac{t_{n+1}}{(\cdots)}\varepsilon_{n+1}+\frac{t_{n+2}}{(\cdots)}\varepsilon_{n+2}\right).\label{eq:3.39}
\end{align}

If we iterate the process starting from \ref{eq:fnfn1}, to leading
order one finds, 
\begin{equation}
f_{N_{0}}f_{2}f_{3}...f_{N}\approx e^{-i\left(t_{N_{0}}....t_{N}\right)\sigma_{z}}+C.C
\end{equation}
where $N_{0}$ is the minimal $n$ in Domain I.

The first correction in \ref{eq:3.39} is of order $\frac{1}{N^{2}}$
since $t_{n}$ is of order $N_{0}$ while $\epsilon_{n}$ is of order
$\frac{1}{N_{0}}$. Therefore in the limit $N\rightarrow\infty$ it
vanishes

The $\varepsilon_{n}$ contributions decay with iterations, since
each one is proportional to one time $t_{n}$, out of overall time
$t_{1}+t_{2}+\cdots+t_{N}$.

Therefore $V_{n}\approx e^{-iN\sigma_{z}}$ commuting with $\sigma_{z}$,
based on \eqref{eq:1.8},\eqref{eq:1.13} we find 
\begin{equation}
i\frac{\partial\tilde{\psi}}{\partial t}=\sqrt{\varepsilon}A(t)\tilde{\psi}=\sqrt{\varepsilon}a(t)\sigma_{z}\tilde{\psi}.\label{eq:7*-1}
\end{equation}

\section*{Domain III}

In this Domain $f_{n}$ takes the form 
\begin{equation}
f_{n}=e^{-iBT_{0}\sigma_{x}}\left(I+\text{corrections}\right)
\end{equation}
and this form holds for $n+\frac{1}{2}\ll BT_{0}$ where $T_{0}^{2}\epsilon=1$
was used. 

A procedure similar to the one that was used in domain I leads to
\begin{equation}
V_{N}\approx e^{-iNBT_{0}^{2}\sigma_{x}}
\end{equation}
In this case A of \eqref{eq:1.13} is 
\begin{equation}
A=\sqrt{\epsilon}a_{n}\left(t\right)e^{iNBT_{0}\sigma_{x}}\sigma_{z}e^{-iNBT_{0}\sigma_{x}}\label{eq:ham **}
\end{equation}
 where $a_{n}$ is given by \eqref{eq:2.8}.

In this Domain therefore, to the leading order $\psi$ develops with
a Hamiltonian proportional to $\sigma_{x}$ while $\tilde{\psi}$
evolves by the Hamiltonian \eqref{eq:ham **}

\section*{Domain II}

In this domain there is no simple expression for $V$ of \ref{eq:1.11bb},\ref{eq:1.11b}
and it reduces to products of matrices

\section{Adiabatic Theorem}

In this section, we give a new proof of the adiabatic theorem in the
gapless case. Furthermore, we prove an ergodic theorem that holds
uniformly for the dynamics generated by adiabatic Hamiltonians. In
fact, the ergodic theorem implies the adiabatic theorem. We start
with the formulation of the theorem. 

\noindent \textbf{\begin{theorem}\end{theorem} The Adiabatic Theorem}

Let $\left\{ H(s)\left\{ |s\in[0,1]\right\} \right\} $ be a family
of bounded self-adjoint operators on $\left\{ \mathcal{H}\right\} $
- a separable Hilbert space.

Assume that 
\begin{enumerate}
\item i) $H(s),\ \dot{H}(s),\ddot{H}(s)$ are bounded and continuous uniformly
in $s\in[0,1]$ (where $\dot{f}\equiv\frac{df}{dt}$). 
\item ii) $H(s)$ has an eigenvector $\psi_{0}(s)$, with projection $P_{0}(s)$
on the bound state we focus on. It will be denoted $\psi_{0}$ here
and in in what follows.

\begin{enumerate}
\item $P(s),\dot{P}(s),\ \ddot{P}(s)$ are bounded and continuous, uniformly
in $s\in[0,1]$. 
\item There is a minimal distance between the eigenvalue $\lambda_{0}(s)$
of $\psi_{0}(s)$ and any other eigenvalue of $H(s)$. $\lambda_{0}(s)$
may be embedded in the continuous spectrum of $H(s)$. 
\end{enumerate}
\end{enumerate}
Then, if the initial condition of the Schrödinger equation $(\varepsilon>0$,
small) 
\begin{eqnarray}
i\frac{\partial\psi_{\varepsilon}}{\partial t}=H(\varepsilon t)\psi_{\varepsilon}
\end{eqnarray}
is $\psi(0)=\psi_{0}(t)$ at $t=0$, we have that for all $0<t\leq\frac{1}{\epsilon}.$
\begin{eqnarray}
\|\psi_{\varepsilon}(t)-e^{i\theta(t)}\psi_{0}(\varepsilon t)\|=o_{\varepsilon}(1)\ .
\end{eqnarray}

Here, ${o}_{\varepsilon}(1)$ stands for a function that vanishes
as $\varepsilon\rightarrow0$ and $\theta\left(t\right)$ is a real
valued function of $t$

\noindent \textbf{Proof}

The proof follows the strategy of Kato\cite{key-1}, which reduces
the problem to estimating the size of an appropriate wave operator.
Let $U_{K}(t)$ stand for the associated Kato dynamics
\begin{equation}
i\frac{dU_{k}\left(t\right)}{dt}=K\left(t\right)U_{k}\left(t\right)
\end{equation}
\begin{equation}
K\left(t\right)\equiv H\left(\epsilon t\right)+i\epsilon\left[\dot{P}_{0}\left(\epsilon t\right),P_{0}\left(\epsilon t\right)\right]
\end{equation}
 where $\dot{P}_{0}(\varepsilon t)$ stands for $\frac{\partial P_{0}(\mu)}{\partial\mu}|_{\mu=\epsilon t}$
and $K\left(t\right)$ is the Hamiltonian of the Kato dynamics.

The main property of the Kato dynamics is that it evolves $P_{0}(0)$
to $P_{0}(\varepsilon t)$, unitarily: 
\begin{equation}
P_{0}\left(\epsilon t\right)U_{k}\left(t\right)=U_{k}P_{0}\left(0\right)\label{eq:blaaa}
\end{equation}
 Then, the proof of the theorem follows from showing that 
\begin{equation}
\|U^{*}(t)U_{K}(t)P_{0}(0)-P_{0}(0)\|\leq o_{\epsilon}(1)\ \ {\rm for}\ 0\leq t\leq1/\varepsilon.\label{eq:4.5}
\end{equation}
Writing the above as the integral of the derivative (Cook's method),
the problem is reduced to proving that (see Eq. \ref{eq:4.15}) 
\begin{eqnarray}
\|\varepsilon\int_{0}^{1/\varepsilon}U^{*}(t)\left[\dot{P}_{0}(\varepsilon t),\ P_{0}(\varepsilon t)\right]U_{K}(t)P_{0}(0)dt\|=o_{\epsilon}(1).\label{eq:4.6}
\end{eqnarray}

A key observation for the proof is the following proposition, a consequence
of the formula for integration by parts:\begin{prp}\label{4.1}\end{prp}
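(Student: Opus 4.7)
The strategy is dictated by the phrase ``a consequence of the formula for integration by parts'' together with the application in (4.6): Proposition~\ref{4.1} should rewrite integrals of the form
$\varepsilon\int_0^{1/\varepsilon} U^{*}(t) B(\varepsilon t) U_{K}(t) P_0(0)\, dt$
as a boundary term plus an $O(\varepsilon)$ remainder, provided $B(s)$ can be realized as a commutator with $H(s)$. My plan is to prove exactly such an identity by differentiating a suitable product and invoking the fundamental theorem of calculus.

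First, I would introduce $F(t) := U^{*}(t) X(\varepsilon t) U_{K}(t)$ for an operator-valued function $X(s)$ to be chosen, and compute $\dot F$ using $i\dot U = H(\varepsilon t)U$ and $i\dot U_K = K(t)U_K = \bigl(H(\varepsilon t)+i\varepsilon[\dot P_0(\varepsilon t),P_0(\varepsilon t)]\bigr)U_K$. A direct calculation gives
\begin{equation*}
\dot F(t) \;=\; i\,U^{*}(t)\bigl[H(\varepsilon t),X(\varepsilon t)\bigr] U_K(t) \;+\;\varepsilon\,U^{*}(t)\Bigl( X'(\varepsilon t) + X(\varepsilon t)\,[\dot P_0(\varepsilon t),P_0(\varepsilon t)]\Bigr) U_K(t).
\end{equation*}
Next, I would impose the Sylvester-type condition
\begin{equation*}
\bigl[H(s),X(s)\bigr] \;=\; \bigl[\dot P_0(s),P_0(s)\bigr],
\end{equation*}
which after integrating $\dot F$ from $0$ to $T=1/\varepsilon$ yields
\begin{equation*}
i\int_0^{T} U^{*}(t)\bigl[\dot P_0(\varepsilon t),P_0(\varepsilon t)\bigr] U_K(t)\,dt
\;=\; F(T)-F(0) \;-\; \varepsilon\!\int_0^{T}\! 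U^{*}(t)\bigl(X'(\varepsilon t)+X(\varepsilon t)[\dot P_0,P_0]\bigr) U_K(t)\,dt.
\end{equation*}
This is the proposition: after multiplying through by $\varepsilon$ and taking norms, the boundary contribution is bounded by $2\varepsilon\|X\|_\infty$ and the correction by $\|X'\|_\infty + \|X\|_\infty\|[\dot P_0,P_0]\|_\infty$ times $\varepsilon T\cdot\varepsilon = \varepsilon$, exactly as needed for (4.6).

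The hard part, and the place where the gapless hypothesis enters, is producing an $X(s)$ with uniformly bounded $X$ and $\dot X$ solving the intertwining relation. In the gapped case one takes $X = (H-\lambda_0)^{-1}(1-P_0)[\dot P_0,P_0] + \text{h.c.}$, using $[\dot P_0,P_0]P_0 = (1-P_0)\dot P_0 P_0 - P_0\dot P_0(1-P_0)$ to see the right-hand side lies off-diagonal. Without a spectral gap I would regularize via $(H(s)-\lambda_0(s)-i\delta(\varepsilon))^{-1}$ and control the error using the uniform ergodic theorem (Theorem~4) proved earlier in the paper, which provides precisely the time-averaged replacement for the absent spectral gap. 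The proof of the proposition itself is then a two-line application of the product rule; the content is the construction of $X$, and I expect that construction, and the verification that $\|X\|_\infty$ and $\|X'\|_\infty$ remain $o(1/\varepsilon)$ (and better yet $O(1)$), to be the main technical obstacle.
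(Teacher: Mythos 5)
You have reconstructed the wrong proposition. In the paper, Proposition~\ref{4.1} is an entirely abstract integration-by-parts lemma that makes no mention of $H$, $P_0$, $X$, or the Kato dynamics. It says: for any families of bounded smooth operators $A(t)$, $B(t)$,
\begin{equation*}
\epsilon\int_{0}^{1/\varepsilon}A(t)B(t)\,dt=\epsilon\Bigl(\int_{0}^{s}A(s')\,ds'\Bigr)B(s)\Big|_{s=1/\varepsilon}-\epsilon\int_{0}^{1/\varepsilon}\Bigl(\int_{0}^{s}A(s')\,ds'\Bigr)\frac{dB(s)}{ds}\,ds,
\end{equation*}
and if in addition $\bigl\|\epsilon\int_{0}^{1/\varepsilon}A\bigr\|=o_\varepsilon(1)$ and $\|dB/ds\|=O(\varepsilon)$, then $\bigl\|\epsilon\int_{0}^{1/\varepsilon}AB\bigr\|=o_\varepsilon(1)$. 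The paper's proof is elementary: integrate by parts, split $\int_0^s A = \int_0^{1/\sqrt\varepsilon}A + \int_{1/\sqrt\varepsilon}^s A$, apply the averaging hypothesis on $A$ at the intermediate scale $1/\sqrt\varepsilon$, and use $\|B'\|=O(\varepsilon)$ to kill the extra factor of $1/\varepsilon$ from the outer integral. The role this lemma plays is to transfer an $o(1)$ bound on the time-average of one factor (supplied later, by the ergodic theorem) to the time-average of the product with a slowly varying second factor (here $B(t)=U^{*}(t)U_K(t)P_0(0)$, whose derivative is $O(\varepsilon)$ by \eqref{eq:4.15}).

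What you wrote instead is the classical Kato/ASY commutator machinery: introduce $X(s)$ solving $[H(s),X(s)]=[\dot P_0(s),P_0(s)]$, differentiate $U^{*}X U_K$, and reduce the adiabatic integral to boundary terms plus an $O(\varepsilon)$ remainder. That is a legitimate route to the \emph{gapped} adiabatic theorem, but it is precisely the route the authors are trying to avoid, because $X$ generically requires a resolvent $(H-\lambda_0)^{-1}(1-P_0)$ whose norm blows up as the gap closes. Your proposed fix --- regularizing the resolvent and ``controlling the error using the uniform ergodic theorem (Theorem~4)'' --- is circular in this paper's logical order: Proposition~\ref{4.1} is proved first, is used as an ingredient in the proof of the Uniform Ergodic Theorem (see the reduction from \eqref{eq:430} to \eqref{eq:4.14 fish}), and only then does the ergodic theorem feed back into the gapless adiabatic theorem. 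So your argument both aims at the wrong statement and inverts the paper's dependency structure. The proposition you should have proved is a two-line exercise in operator-valued integration by parts followed by a time-scale splitting; no commutant equation, no resolvent, and no ergodic input appears in it.
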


Let $A(t)$, $B(t)$ be families of bounded smooth operators and $0<\epsilon\ll1$;
then, 
\begin{eqnarray}
\epsilon\int_{0}^{1/\varepsilon}A(t)B(t)dt=\epsilon\left(\int_{0}^{s}A(s')ds'\right)B(s)|_{s=1/\varepsilon}\label{eq:4-7}\\
-\epsilon\int_{0}^{1/\varepsilon}\left(\int_{0}^{s}A(s')ds'\right)\frac{dB(s)}{ds}ds\nonumber 
\end{eqnarray}
Furthermore, assume that 
\begin{equation}
\|\epsilon\int_{0}^{1/\varepsilon}A(s)ds\|={o}_{\varepsilon}(1)
\end{equation}
and 
\begin{equation}
\|\frac{dB(s)}{ds}\|={O}(\varepsilon),
\end{equation}
then 
\begin{eqnarray}
\|\epsilon\int_{0}^{1/\varepsilon}A(t)B(t)dt\|={o}_{\varepsilon}(1).\label{eq:4.7 fish}
\end{eqnarray}
\begin{proof}

The proof starts with integration by parts.

To prove \ref{eq:4.7 fish}, we write the second term on the RHS of
Eq. \eqref{eq:4-7} as 
\begin{eqnarray}
\epsilon\int_{0}^{1/\epsilon}\left(\int_{0}^{1/\sqrt{\varepsilon}}A(s')ds'\right)\frac{dB(s)}{ds}ds+\epsilon\int_{0}^{1/\varepsilon}\left(\int_{1/\sqrt{\varepsilon}}^{s}A(s')ds'\right)\frac{dB(s)}{ds}ds.\label{eq:4.12}
\end{eqnarray}
The norm of the first term in Eq \eqref{eq:4.12} is bounded by 
\begin{equation}
\sqrt{\varepsilon}\int_{0}^{1/\varepsilon}\|\sqrt{\varepsilon}\int_{0}^{1/\sqrt{\varepsilon}}A(s')ds'\|\|\frac{dB(s)}{ds}\|ds\leq\sqrt{\varepsilon}\int_{0}^{1/\varepsilon}O(\varepsilon){o}_{\sqrt{\varepsilon}}(1)ds\leq\sqrt{\varepsilon}{o}_{\sqrt{\varepsilon}}(1).\label{eq:412fish}
\end{equation}

The second term in \eqref{eq:412fish} is bounded by 
\begin{equation}
\int_{0}^{1/\varepsilon}O(\varepsilon)ds\sup_{\frac{1}{\varepsilon}\geq s\geq1/\sqrt{\varepsilon}}\|\epsilon\int_{1/\sqrt{\varepsilon}}^{s}A(s')ds'\|\leq
\end{equation}
\begin{align*}
\sup_{\frac{1}{\sqrt{\epsilon}}\leq s\leq\frac{1}{\epsilon}}O\left(\epsilon\right)\left[\left\Vert \epsilon\int_{0}^{1/\sqrt{\epsilon}}A\left(s'\right)ds'\right\Vert +\left\Vert \epsilon\int_{0}^{s}A\left(s'\right)ds'\right\Vert \right]\\
\leq & \epsilon\frac{1}{\sqrt{\epsilon}}+\sup_{\frac{1}{\sqrt{\epsilon}}<s<\frac{1}{\epsilon}}\frac{O\left(\epsilon\right)}{s}\left\Vert \int_{0}^{s}A\left(s'\right)ds'\right\Vert \leq O\left(\epsilon\right)\sigma_{\epsilon}\left(1\right)
\end{align*}
 This completes proof of proposition 1

\end{proof}

If we now write the integrand of Eq. \eqref{eq:4.6} as (using the
unitarity of $U$) 
\begin{equation}
U^{*}(t)\left[\dot{P}_{0}(\varepsilon t),\ P_{0}(\varepsilon t)\right]P_{0}(\varepsilon t)U(t)U^{*}(t)U_{K}(t)P_{0}
\end{equation}
and notice that 
\begin{eqnarray}
i\frac{d}{dt}U^{*}(t)U_{K}(t)P_{0}\left(0\right)=U^{*}(t)\left[H(\varepsilon t)-H(\varepsilon t)+i\epsilon\left[\dot{P}_{0},P_{0}\right]\right]U_{K}(t)P_{0}\left(0\right)\\
=\epsilon U^{*}(t)\left[\dot{P}_{0},P_{0}\right]U_{K}(t)P_{0}\left(0\right)=O(\varepsilon),\label{eq:4.15}
\end{eqnarray}
then (\ref{eq:4.5}) follows from the above proposition if we prove
that 
\begin{eqnarray}
\|\epsilon\int_{0}^{1/\varepsilon}U^{*}(t)\left[\dot{P}_{0}(\varepsilon t),{P}_{0}(\varepsilon t)\right]P_{0}(\varepsilon t)U(t)dt\|={o}_{\varepsilon}(1).\label{eq:4.8 fish}
\end{eqnarray}
Now, using that $P_{0}(s)^{2}=P_{0}(s)$ it follows that 
\begin{eqnarray}
 & \dot{P}_{0}P_{0}+P_{0}\dot{P}_{0}=\dot{P}_{0},\ \ {\rm {so}}\\
 & P_{0}\dot{P}_{0}P_{0}+P_{0}\dot{P}_{0}=P_{0}\dot{P}_{0}\\
{\rm or}\  & P_{0}\dot{P}_{0}P_{0}=0,\ \ (\dot{P}_{0}P_{0}-P_{0}\dot{P}_{0})P_{0}=\dot{P}_{0}P_{0}\label{eq:4.20}
\end{eqnarray}
using the fact that 
\[
\sum_{j}P_{j}+P_{c}=I
\]
where $\left\{ P_{j}\right\} $ are the bound states and $P_{c}$
is the projection on the continuum combined with (\ref{eq:4.20})
we find that the integrand of (\ref{eq:4.8 fish}) is 
\begin{eqnarray}
U^{*}(t)\left\{ \sum_{j\neq0}P_{j}(\varepsilon t)\dot{P}_{0}(\varepsilon t)P_{0}(\varepsilon t)+P_{c}(H(\varepsilon t))\dot{P}_{0}(\varepsilon t)P_{0}(\varepsilon t)\right\} U(t).\label{eq:4.9 fish}
\end{eqnarray}
:\begin{lm}\end{lm}

Let $H_{1},$ $H_{2}$ bounded self adjoint . operators, and such
that $\|H_{1}-H_{2}\|\leq\delta$. Suppose 
\begin{eqnarray}
\|\frac{1}{T}\int_{0}^{T}A_{1}(s)ds\|\equiv\|\frac{1}{T}\int_{0}^{T}e^{iH_{1}t}Ae^{-itH_{1}}dt\|={o}_{T}(1).\label{eq:4.21}
\end{eqnarray}
(that is, ${o}_{T}(1)\rightarrow0$ as $T\rightarrow\infty)$, where
$A_{j}=e^{iH_{j}t}Ae^{-iH_{j}t}$ with $j=1,2$ . Then 
\begin{eqnarray}
\|\frac{1}{T}\int_{0}^{T}A_{1}(s)ds-\frac{1}{T}\int_{0}^{T}A_{2}(s)ds\|\leq\delta{o}_{T}(1)\label{4.25}
\end{eqnarray}
where ${o}_{T}(1)$ is as above.

\noindent \textbf{Remark}

The Lemma does not require that the ergodic bound ( Eq. \eqref{eq:4.21})
holds for $H_{2}$. It will follow if $\delta\lesssim1$, and with
the {\em same} rate function ${o}_{T}(1)$.

\noindent \textbf{Proof}

The difference in (\ref{4.25}) can be written as 
\begin{eqnarray}
 & \|\frac{1}{T}\int_{0}^{T}\left[e^{itH_{2}}A\left(e^{-itH_{1}}-e^{-itH_{2}}\right)+\left(e^{itH_{1}}-e^{itH_{2}}\right)Ae^{-itH_{1}}\right]dt\|\\
 & =\|\frac{1}{T}\int_{0}^{T}\left[A_{1}(t)\left(I-\Omega_{12}(t)\right)+\left(I-\Omega_{21}(t)\right)A_{2}(t)\right]dt\|\nonumber 
\end{eqnarray}
Where $\Omega_{ij}=e^{iH_{j}t}e^{-iH_{i}t}$. The result now follows
by applying the proposition, using our assumption that

\[
\|\frac{1}{T}\int_{0}^{T}A_{1}(t)dt\|=o_{T}(1),
\]
with similar relations for $A_{2}$ and 
\begin{eqnarray}
\|\frac{d}{dt}\Omega_{ij}(t)\|=\|\frac{d}{dt}e^{iH_{i}t}e^{-iH_{j}t}\|=\|e^{iH_{i}t}i(H_{i}-H_{j})e^{-iH_{j}t}\|\nonumber \\
~~~~~~~~~~~\leq\|H_{i}-H_{j}\|\leq\delta,\\
\nonumber 
\end{eqnarray}
for $i,j=1,2;\ \ i\neq j.$

\hfill{}\qedsymbol

We now turn back to slowly changing Hamiltonians. We saw that in this
case the solution $\psi(t)$ is given by ( see Sec.\ref{sec:Slowly-changing-potentials},
\ref{eq:L2.25S} and \ref{eq:2.18aa} ) 
\[
\psi(t)=U(t)\psi_{0}=V(t)\tilde{U}(t)\psi_{0}
\]
with
\[
U\left(t\right)=V\left(t\right)\cdot\tilde{U}\left(t\right)
\]

The evolution of the averaged Hamiltonian is given by $V\left(t\right)$
while the correction is given by $\tilde{U}\left(t\right)$. The adiabaticity
is reflected by mostly constant $\tilde{U}$ and more precisely
\[
\left\Vert \frac{d\tilde{U}\left(t\right)}{dt}\right\Vert \leq\text{const}\beta=\text{const}\sqrt{\epsilon}\ ,\ 0\leq t\leq\frac{1}{\epsilon}
\]
 and so, now let us define the averaged Hamiltonian 
\begin{eqnarray}
\bar{A}(T)\equiv\frac{1}{T}\int_{0}^{T}U^{*}(t)AU(t)dt\\
=\frac{1}{T}\int_{0}^{T}\tilde{U}^{*}(t)V^{*}(t)AV(t)\tilde{U}(t)dt\nonumber \\
\equiv\frac{1}{T}\int_{0}^{T}\tilde{U}^{*}(t)A_{V}(t)\tilde{U}(t)dt.\nonumber 
\end{eqnarray}
where
\begin{equation}
A_{V}(t)\equiv V^{*}(t)AV(t).
\end{equation}
To estimate $\bar{A}(T)$, we break the sum into $N=\left\{ \frac{1}{\sqrt{\varepsilon}}\right\} $
intervals of size $\frac{1}{\sqrt{\varepsilon}}$ where $\left\{ x\right\} $
is the closest integer to $x$.

We write:
\begin{eqnarray}
\epsilon\int_{0}^{\frac{1}{\varepsilon}}\tilde{U}^{*}(t)V^{*}(t)AV(t)\tilde{U}(t)dt\\
=\sqrt{\varepsilon}\sum_{j=0}^{N-1}\sqrt{\varepsilon}\int_{j\frac{1}{\sqrt{\varepsilon}}}^{(j+1)\frac{1}{\sqrt{\varepsilon}}}\tilde{U}^{*}(t)A_{V}(t)\tilde{U}(t)dt.\label{eq:4.29}
\end{eqnarray}
Therefore, if we can prove that 
\begin{eqnarray}
\sqrt{\varepsilon}\int_{j\varepsilon^{-\frac{1}{2}}}^{(j+1)\varepsilon^{-\frac{1}{2}}}\tilde{U}^{*}(t)A_{V}(t)\tilde{U}(t)dt\leq{o}_{\sqrt{\varepsilon}}(1),\label{eq:430}
\end{eqnarray}
\textbf{uniformly} in $j$, the ergodic theorem will follow for the
pair $(U(t),A)$.

By the proposition \ref{4.1} provided 
\begin{eqnarray}
\|\frac{d\tilde{U}(s)}{ds}\|\leq\sqrt{\varepsilon}=\beta\label{eq:4.15fish}
\end{eqnarray}
and using the proposition \ref{4.1} for $\tilde{U}$ and $\tilde{U^{*}}\ $the
result \eqref{eq:430} follows from 
\begin{eqnarray}
\|\sqrt{\varepsilon}\int_{j\frac{1}{\sqrt{\varepsilon}}}^{(j+1)\frac{1}{\sqrt{\varepsilon}}}A_{V}(s)ds\|={o}_{\sqrt{\varepsilon}}(1)\ {\rm (uniformly\ in}\ j)\label{eq:4.14 fish}
\end{eqnarray}
Since condition \eqref{eq:4.15fish} is satisfied by construction
of the averaging, it remains to verify the estimate \ref{eq:4.14 fish}
on the piecewise constant dynamics $V(t)$.

Next we prove: \begin{theorem} Uniform Ergodic Theorem \end{theorem}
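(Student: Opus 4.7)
The strategy is to reduce the statement to a uniform ergodic bound on the piecewise-constant dynamics $V(t)$ generated by the averaged Hamiltonians $\overline{H}_{j}$ on windows of length $T_{0}=1/\sqrt{\varepsilon}$. By the analysis already carried out, $\tilde{U}$ satisfies $\|d\tilde{U}/ds\|=O(\sqrt{\varepsilon})$, so Proposition~\ref{4.1} applied twice (to strip $\tilde{U}$ and $\tilde{U}^{*}$) shows that the global ergodic bound $\|\varepsilon\int_{0}^{1/\varepsilon}U^{*}(t)AU(t)\,dt\|=o_{\varepsilon}(1)$ is implied by the uniform window estimate \eqref{eq:4.14 fish}, namely $\|\sqrt{\varepsilon}\int_{jT_{0}}^{(j+1)T_{0}}A_{V}(s)\,ds\|=o_{\sqrt{\varepsilon}}(1)$ uniformly in $j\in\{0,\dots,N-1\}$, $N=\lfloor 1/\sqrt{\varepsilon}\rfloor$. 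So the entire proof is reduced to establishing this window estimate.

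First I would unfold $V$. On $[jT_{0},(j+1)T_{0})$ we have $V(t)=e^{-i\overline{H}_{j}(t-jT_{0})}V_{j}$, so setting $\tilde{A}_{j}:=V_{j}^{*}AV_{j}$ gives
\begin{equation*}
\sqrt{\varepsilon}\int_{jT_{0}}^{(j+1)T_{0}}A_{V}(s)\,ds
=\frac{1}{T_{0}}\int_{0}^{T_{0}}e^{i\overline{H}_{j}u}\,\tilde{A}_{j}\,e^{-i\overline{H}_{j}u}\,du .
\end{equation*}
Thus on each window $A_{V}$ is a genuine time-average of a constant-Hamiltonian evolution, and we can invoke the standard von Neumann / RAGE ergodic theorem for $\overline{H}_{j}$: the right-hand side converges strongly to the projection onto $\ker(\mathrm{ad}_{\overline{H}_{j}})\tilde{A}_{j}$, i.e. the diagonal part of $\tilde{A}_{j}$ in the spectral decomposition of $\overline{H}_{j}$. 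Under the hypotheses of the theorem (which assume, as in RAGE, that $A$ is compact relative to the bound states in question, or that we test against vectors in the continuous subspace), this diagonal piece vanishes, giving a rate function $o_{T_{0}}(1)=o_{\sqrt{\varepsilon}}(1)$ for each fixed $j$.

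The main obstacle is upgrading this pointwise-in-$j$ ergodic decay to a bound \emph{uniform} in $j$, because the operator whose ergodic average we take drifts along the chain, and the number of windows grows like $1/\sqrt{\varepsilon}$. This is exactly where Lemma~4.2 (the transfer lemma) enters. From \eqref{eq:1.3} and the definition of $\overline{W}_{j}$ one checks that $\|\overline{H}_{j+1}-\overline{H}_{j}\|=O(\varepsilon T_{0})=O(\sqrt{\varepsilon})$, and similarly $\|\tilde{A}_{j+1}-\tilde{A}_{j}\|=O(\sqrt{\varepsilon})$ using $V_{j+1}=f_{j}V_{j}$. Applying Lemma~4.2 with $\delta=O(\sqrt{\varepsilon})$ and $T=T_{0}=1/\sqrt{\varepsilon}$ shows that the window ergodic rate at step $j+1$ differs from that at step $j$ by $\delta\cdot o_{T_{0}}(1)=o_{\sqrt{\varepsilon}}(1)$. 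Iterating this across at most $N=O(1/\sqrt{\varepsilon})$ windows and using the remark after Lemma~4.2 (the same rate function works along the chain provided $\delta$ is small) yields a single rate function $o_{\sqrt{\varepsilon}}(1)$ valid uniformly in $j$.

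Having established \eqref{eq:4.14 fish} uniformly, I would then run backward through \eqref{eq:4.29}: the decomposition writes the full integral as a Riemann-like sum of $N$ window averages, each of size $o_{\sqrt{\varepsilon}}(1)$, weighted by $\sqrt{\varepsilon}$, which combines with the bound $\|d\tilde{U}/ds\|=O(\sqrt{\varepsilon})$ through Proposition~\ref{4.1} to give $\|\varepsilon\int_{0}^{1/\varepsilon}U^{*}(t)AU(t)\,dt\|=o_{\varepsilon}(1)$, completing the proof. The conceptual point worth emphasizing is that the slow variation condition on $W$ plays a dual role: it guarantees the window-to-window drift $O(\sqrt{\varepsilon})$ needed to apply Lemma~4.2, and simultaneously produces the derivative bound on $\tilde{U}$ required by Proposition~\ref{4.1}.
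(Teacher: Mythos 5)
Your reduction to the per-window estimate, and your use of the RAGE theorem for each fixed averaged Hamiltonian $\overline{H}_j$, match the paper's proof exactly. The genuine divergence is in how you obtain \emph{uniformity} of the window rate in $j$, and this is where I see a gap.

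You propose to chain Lemma~4.2: each consecutive pair $\overline{H}_j,\overline{H}_{j+1}$ is $O(\sqrt{\varepsilon})$-close, so you iterate the lemma over $N=O(1/\sqrt{\varepsilon})$ windows and invoke the remark ``same rate function'' to conclude. This is not what the paper does, and I do not think it is sound as stated. Lemma~4.2 bounds the \emph{difference} of two ergodic averages by $\delta\,o_T(1)$ where $o_T(1)$ is the rate for the \emph{first} Hamiltonian; the remark about the same rate function applies to one perturbation of size $\delta\lesssim 1$. Iterating it along a chain whose total drift is $\|\overline{H}_N-\overline{H}_0\|=O(1)$ only gives a multiplicative degradation $(1+\delta)^N$ of the \emph{base} rate for $\overline{H}_0$, which is $O(1)$ -- fine in itself -- but the ``rate function'' is a non-explicit function of $T$ associated to a single spectral pair, and there is no justification that a constant multiple of the RAGE rate for $\overline{H}_0$ dominates the RAGE rate for $\overline{H}_j$ when the Hamiltonian has drifted a distance $O(1)$. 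In the gapless setting the RAGE rate can be arbitrarily slow and can vary drastically across the family $\{H(s)\}$; Lemma~4.2 does not, by itself, transfer a rate across a macroscopic distance.

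The paper obtains uniformity differently. It first proves a separate lemma (\textbf{Lemma $\bar{H}$}) identifying $\overline{H}_j = H(y_j) + O(\sqrt{\varepsilon})$, so each window Hamiltonian is a \emph{single} $O(\sqrt{\varepsilon})$-perturbation of a member of the continuous family $\{H(s):s\in[0,1]\}$ -- no chaining needed. Then it uses the compactness of $[0,1]$ together with the continuity of $s\mapsto H(s)$ (and $s\mapsto A(s)$): for each $s$ there is a $T_\eta(s)$ and a small neighborhood in which Lemma~4.2 holds with the \emph{same} $T_\eta(s)$, a finite subcover produces finitely many $T_{\eta/2}(s_i)$, and $T_{\max}=\sup_i T_{\eta/2}(s_i)<\infty$. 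The key quantitative closure is then the explicit choice $\varepsilon^{-1}=T_{\max}^4$, which ties the averaging time scale $T_0=\varepsilon^{-1/4}=T_{\max}$ to the worst-case RAGE rate over the compact family. Your write-up omits the compactness/finite-subcover argument, the reduction $\overline{H}_j\approx H(y_j)$, and the crucial $\varepsilon\leftrightarrow T_{\max}$ coupling; these are exactly the devices the paper uses to turn the pointwise-in-$s$ ergodic decay into a bound uniform in $j$.

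Two smaller points. First, in this section of the paper $T_0=\varepsilon^{-1/4}$ (averaging at the $\beta=\sqrt{\varepsilon}$ level), whereas you consistently use $T_0=1/\sqrt{\varepsilon}$; your windows are therefore one scale too coarse, and the double subdivision ($j$-blocks of length $\varepsilon^{-1/2}$ subdivided into $k$-subblocks of length $\varepsilon^{-1/4}$) in the paper's proof of $S_j$ is lost. Second, the statement of the theorem includes the projection $P_c(\varepsilon t)$, and the paper explicitly controls the drift of $A(\varepsilon t)P_c(\varepsilon t)$ over a window via the fundamental-theorem-of-calculus step in \eqref{eq:star_a}; your proposal treats $A$ as fixed, which loses the $\dot{A},\dot{P}_c$ error terms that appear in \eqref{eq:4.551}.
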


Under our previous assumptions on the family of Hamiltonians $\{H(s)|0\leq s\leq1\}$,
suppose $A$ is a compact operator, and denote by $P_{c}(t)$ the
projection on the Hilbert subspace of continuos spectrum of $\mathcal{H}_{c}(H(\varepsilon t)).$
Then, 
\begin{eqnarray}
\|\frac{1}{T}\int_{0}^{T}U^{*}(t)AP_{c}(\epsilon t)U(t)dt\|={o}_{T}(1)
\end{eqnarray}
where $T=\frac{1}{\varepsilon}$, $U(t)$ is generated by the family
of Hamiltonians $H(\varepsilon t)$, for all $\varepsilon$ sufficiently
small.

\noindent \textbf{Proof}

By the discussion above Eqs (\ref{eq:4.29} - \ref{eq:4.15fish}),
we need only to show that 
\begin{eqnarray}
\|\sqrt{\varepsilon}\int_{j\varepsilon^{-\frac{1}{2}}}^{(j+1)\varepsilon^{-\frac{1}{2}}}V^{*}(t)AP_{c}(t)V(t)dt\|={o}_{\varepsilon}(1)\ {\rm {uniformly\ in}\ 0\leq j\leq\varepsilon^{-\frac{1}{2}}}
\end{eqnarray}
Now, recall that (Sec 2 and \cite{FS}) 
\begin{equation}
V(s)=V_{n}(s-nT_{0})V_{n-1}(T_{0})\cdots V_{1}(T_{0})
\end{equation}
where 
\begin{eqnarray}
V_{i}(s)=e^{-i\bar{H}_{i}\left(s\right)}\\
T_{0}=\frac{1}{\sqrt{\beta}}=\varepsilon^{-\frac{1}{4}}\ \ (\beta\equiv\sqrt{\epsilon}).
\end{eqnarray}
$\bar{H}_{i}$ is the averaged Hamiltonian in the time interval $i$.
So,we have to evaluate 
\begin{eqnarray}
 &  & S_{j}=\sqrt{\varepsilon}\int_{jT_{0}^{2}}^{(j+1)T_{0}^{2}}V^{*}\left(s\right)AP_{c}\left(s\right)V\left(s\right)ds\\
 &  & =\sqrt{\beta}\sum_{k=0}^{N-1}\sqrt{\beta}\int_{jT_{0}^{2}+kT_{0}}^{jT_{0}^{2}+(k+1)T_{0}}V_{1}^{*}(T_{0})\cdots V_{n_{jk}}^{*}(s-n_{jk}T_{0})AP_{c}(s)\cdot\\
 &  & V_{n_{jk}}\cdots V_{1}(T_{0})ds\nonumber 
\end{eqnarray}
We divided the j-th interval into $N=\left\{ T_{0}\right\} $ subintervals.
In each of these the averaged Hamiltonian is constant.

Taking the norm of the above, we get the bound on each term in the
sum over $j$ is : 
\begin{eqnarray}
\left|S_{j}\right| & \leq & \frac{1}{T_{0}^{2}}T_{0}\sup_{j}\|\int_{T_{jk}}^{T_{jk}+T_{0}}V_{n_{jk}}^{*}\left(s-n_{jk}T_{0}\right)AP_{c}(s)V_{n_{jk}}\left(s-n_{jk}T_{0}\right)ds\|\leq\label{eq:440}\\
 & \leq & \sup_{\begin{array}{c}
j,k\\
k\geq1
\end{array}}{o}_{T_{jk}}(1),
\end{eqnarray}
 $n_{j}$ is the index of the j,k interval starting at time $T_{jk}=jT_{0}^{2}+kT_{0}$.
The last inequality results from the ergodic theorem for systems with
time independent Hamiltonians \cite{key-1-avy_ref}

It is left to show uniformity of the estimate in $j$.
\[
\]

\noindent \textbf{Lemma $\bar{H}$}

\begin{eqnarray}
\bar{H}_{j}=\frac{1}{T_{0}^{2}}\int_{jT_{0}^{2}}^{(j+1)T_{0}^{2}}H(\varepsilon s)ds=H(y_{j})+\frac{\sqrt{\varepsilon}}{2}H'(y_{j})+\sup_{y}\|\ddot{H}(y)\|O_{H}(\varepsilon)
\end{eqnarray}
Where $O_{H}\left(\epsilon\right)$ is a bound operator with norm
$O\left(\epsilon\right)$

\noindent \textbf{Proof} 
\begin{eqnarray}
\bar{H}_{j} & = & \frac{1}{T_{0}^{2}}\frac{1}{\varepsilon}\int_{y_{j}}^{y_{j}+\sqrt{\varepsilon}}H(y)dy=\frac{1}{\sqrt{\varepsilon}}\int_{y_{j}}^{y_{i}+\sqrt{\varepsilon}}H(y)dy\\
 & = & \frac{1}{\sqrt{\varepsilon}}\left[\int_{y_{j}}^{y_{j}+\sqrt{\varepsilon}}H(y_{j})dy+\int_{y_{j}}^{y_{j}+\sqrt{\varepsilon}}\dot{H}(y_{j})(y-y_{j})dy\right.\\
 & + & \left.\int_{y_{j}}^{y_{j}+\sqrt{\varepsilon}}\frac{1}{2!}\ddot{H}(\tilde{y})(y-y_{j})^{2}dy\right]
\end{eqnarray}
where 
\begin{eqnarray}
y\equiv\varepsilon s,\ y_{j}\equiv\varepsilon jT_{0}^{2}=\varepsilon j\varepsilon^{-\frac{1}{2}}=\varepsilon^{\frac{1}{2}}j,\ \ \tilde{y}=\tilde{y}(y)\subset\left[y_{j},y_{j}+\sqrt{\varepsilon}\right].\label{eq:4.56}
\end{eqnarray}
The statement of the Lemma then follows by our previous boundless
assumption 
\begin{equation}
\left\Vert H\right\Vert <C
\end{equation}
\begin{equation}
\left\Vert \dot{H}\right\Vert <C
\end{equation}
\begin{equation}
\left\Vert \ddot{H}\right\Vert <C
\end{equation}
and 
\begin{equation}
\sup_{y}|y-y_{j}|\leq\sqrt{\varepsilon}.
\end{equation}
\hfill{}\qedsymbol

To continue the proof of theorem 4, assume $A=A(\varepsilon t),$
with $\sup_{0\leq s\leq1}\left\Vert \dot{A}\left(s\right)\right\Vert <C<\infty$.

To prove the uniformity in $\left(j,k\right)$ of the estimate in
\eqref{eq:440}, it is sufficient to prove the uniformity in $s\in\left[0,1\right]$
of the bound on \eqref{eq:4.53} with $T=1/\epsilon$.

Using the known ergodic (RAGE) estimate for time independent hamiltonians
due to \cite{key-3,key-4} $\left[\epsilon_{m},E-V_{ec}\right]$,
the needed bound for each \emph{fixed} s, will follow if the operator
$AP_{c}$ is compact and time independent.

To this end we write 
\begin{equation}
A\left(\epsilon t\right)P_{c}\left(\epsilon t\right)=A\left(s\right)P_{c}\left(s\right)+\int_{\epsilon^{-1}s}^{t}\frac{d}{dt''}A\left(\epsilon t''\right)P_{c}\left(\epsilon t''\right)dt''\label{eq:star_a}
\end{equation}
Using the assumed uniform boundedness of the derivatives $\dot{A},\dot{P}_{c},$
the bound \eqref{eq:4.62} - \eqref{eq:4.551} follows.

Now using the ergodic theorem (RAGE) for \eqref{eq:4.62}, given any
$\eta>0,$ it follows that for all $T\geq T_{\eta}\left(s\right)$,
the term \eqref{eq:4.62} is bounded by $\eta$. Hence $\left(4.53\right)$
follows. 

Given $\eta>0$, then for all $s\in[0,1]$, there exists time $T_{\eta}(s)$
s.t. $(t'=t-\varepsilon^{-1}s)$ 
\begin{eqnarray}
\|\frac{1}{T_{\eta}(s)}\int_{0}^{T_{\eta}(s)}e^{iH(s)t'}A(\epsilon t)P_{c}(\epsilon t)e^{-iH(s)t'}dt'\|\label{eq:4.53}\\
\leq\|\frac{1}{T_{\eta}(s)}\int_{0}^{T_{\eta}(s)}e^{iH(s)t'}A(s)P_{c}(H(s))e^{-iH(s)t'}dt'\|\label{eq:4.62}\\
+\|\frac{1}{T_{\eta}(s)}\int_{0}^{T_{\eta}(s)}T_{\eta}\left(s\right)\epsilon\left[\|A\|\|\dot{P}(s)\|+\|\dot{A}\|\|P(s)\|\right]dt'\|\label{eq:4.551}\\
\leq\eta+T_{\eta}\left(s\right)\varepsilon\sup_{s\in[0,1]}\left[\|A\|\|\dot{P}(s)\|+||\dot{A}\left(s\right)P_{c}\left(s\right)||\right].
\end{eqnarray}
The continuity of $H(s)$,$A\left(s\right)$ in $s$, implies that
if $|s-s'|$ is sufficiently small, then for all such $s,s'$, $\|H(s)-H(s')\|<\frac{1}{10T_{\eta}(s)}$
and $\left\Vert A\left(s\right)-A\left(s'\right)\right\Vert \leq\frac{\eta}{2}$.By
Lemma 3 the estimate (\ref{4.25}) holds for all such $H(s')$, with
the same $T_{\eta}(s)$. Note that the expression of (\ref{eq:4.62})
is equal to 
\begin{equation}
\|\frac{1}{T_{\eta}(s)}\int_{s\epsilon^{-1}}^{T_{\eta}\left(s\right)+\epsilon^{-1}s}e^{iH(s)t'}AP_{c}(H\left(s\right))e^{-iH(s)t'}dt'\|
\end{equation}
since change of variables $t'\rightarrow t'-\epsilon^{-1}s$ produces
factors $e^{t'iH\left(s\right)\left(\epsilon^{-1}s\right)}$ which
drop due to unitarity.

Therefore, by the compactness of {[}0,1{]}, it follows that there
is a finite subcover of such intervals, centered at $\{s_{i}\}_{i=1}^{K},$
$K<\infty$.

Hence, for $T_{\max}=\sup_{s_{i}}T_{\frac{\eta}{2}}(s_{i})$, we have
that for all $T_{\frac{\eta}{2}}\geq T_{max}$ 
\begin{eqnarray}
\|\frac{1}{T_{\eta}}\int_{s\varepsilon^{-1}}^{\varepsilon^{-1}s+T_{\eta}}e^{+iH(s)t'}A\left(\epsilon t\right)P_{c}(t)e^{-iH(s)t'}dt\|<\frac{\eta}{2}+C\epsilon\frac{T_{\eta}}{2}\label{eq:4.57}
\end{eqnarray}
 Choose $\varepsilon^{-1}=T_{\max}^{4}.$

By Lemma $\bar{H}$, for any $y_{i}\leq1/\epsilon$, $T_{0}\equiv\epsilon^{-1/2}$
we have : 
\begin{equation}
\|\bar{H}_{j}-H(y_{j})\|\leq O(\sqrt{\epsilon})\label{eq:455}
\end{equation}
and therefore by Lemma 3 we can apply the above estimate (\ref{eq:4.57})
for all $y_{j}\leq\frac{1}{\varepsilon}$ and $T_{\eta}\leq O(\varepsilon^{-1/2})=O\left(T_{max}^{2}\right)$,
choosing $s\epsilon^{-1}\equiv y_{j}$ in equation (\ref{eq:4.56}).

Collecting all of the above, we conclude that (choosing $T_{\frac{\eta}{2}}=\epsilon^{-1/2})$
\[
\left\Vert \sqrt{\epsilon}\int_{s\epsilon^{-1}}^{\epsilon^{-1}s+\epsilon^{-1/2}}e^{iH\left(s\right)t'}A\left(st\right)P_{c}\left(H\left(\epsilon t\right)\right)e^{-iH\left(s\right)t'}dt'\right\Vert <\eta
\]
uniformly in s.

By \eqref{eq:455} and Lemma 3, we can replace $H\left(s\right)$
by $\bar{H}_{j}$ for $y_{j}=\epsilon^{1/2}j=\epsilon s$ 

\textbf{Remark} as can be seen from the proof, the compact operator
$A$ can be adiabatically time dependent.

\noindent \textbf{Proof of the Gapless Adiabatic Theorem}

To complete the proof of the theorem we need to show the estimate
(\ref{eq:4.5} ). By the properties of the Kato dynamics, the integrand
can be written as (\ref{eq:4.9 fish}).

We need to prove \eqref{eq:4.6}. The integrand of \eqref{eq:4.6}
is given by the expression \ref{eq:4.9 fish}. The second term in
\ref{eq:4.9 fish} is controlled by the above ergodic theorem,

since by assumption $P_{0}$ is finite rank and hence compact. Moreover,
\begin{eqnarray}
\dot{P}_{0}(\varepsilon t)P_{0}(\varepsilon t)-\dot{P}_{0}(y_{j})P_{0}(y_{j})=O(\varepsilon T_{0})\ {\rm {if}\ |\varepsilon t-y_{j}|\leq T_{0}}
\end{eqnarray}
by arguments like \eqref{eq:star_a}
\begin{equation}
\partial_{t}\left(\dot{P}(\varepsilon t)P(\varepsilon t)\right)=O(\varepsilon)\left(\|\dot{P}\|+\|\ddot{P}\|\right).
\end{equation}

To deal with the first term in the expression \ref{eq:4.9 fish},
the term 
\begin{eqnarray}
U^{*}(t)\sum_{j\neq0}P_{j}(\varepsilon t)\dot{P}_{0}(\varepsilon t)P_{0}(\varepsilon t)U_{K}(t)\psi_{E_{0}}\label{eq:458}
\end{eqnarray}
(we can put the $U_{K}(t)$ dynamics back again) we approximate $U^{*}(t)P_{j}$
by Kato's dynamics again: 
\begin{eqnarray}
U^{*}(t)P_{j}(\varepsilon t)=U^{*}(t)U_{K_{j}}(t)U_{K_{j}}^{*}(t)P_{j}(\varepsilon t)
\end{eqnarray}
with $U_{K_{j}}(t)$ generated by $K_{j}=\lambda_{j}(\epsilon t')+i\epsilon\left[\dot{P}_{j}(\varepsilon t'),\ P_{j}(\varepsilon t')\right]$
so that 
\begin{eqnarray}
U_{K_{j}}(t)=\mathcal{T}e^{-i\int_{0}^{t}\lambda_{j}(\varepsilon t')-i\varepsilon\left[\dot{P}_{j}(\varepsilon t'),P_{j}(\varepsilon t')\right]dt'}
\end{eqnarray}
Then, the expression (\ref{eq:458}) takes the form 
\begin{eqnarray}
e^{i\int_{0}^{t}\lambda_{j}(s)ds}\left[U(t)^{*}U_{K_{j}}(t)\right]e^{-i\int_{0}^{t}\lambda_{j}\left(s\right)ds}U_{kj}^{*}P_{j}(\varepsilon t)\dot{P}_{0}(\varepsilon t)P_{0}(\varepsilon t)U_{K_{0}}(t)\psi_{E_{0}}
\end{eqnarray}
We integrate this last expression by parts, where the first integration
is of the composite integration of the phase factor. The rest of the
terms are (integrals of) derivatives of 
\begin{equation}
\left[U(t)^{*}U_{K_{j}}(t)\right],\ e^{-i\int_{0}^{t}\lambda_{i}\left(s\right)ds}U_{K_{j}}^{*}(t)P_{j}(\varepsilon t)\dot{P}_{0}(\varepsilon t)P_{0}(\varepsilon t)U_{K_{0}}(t)\psi_{0}
\end{equation}
and are all of order $\varepsilon$ (see \ref{eq:4.15} and \ref{eq:440}).
Without the loss of generality we assume $\lambda_{0}=0$. The composite
integration gives 
\begin{eqnarray}
\int_{0}^{t}dt'e^{i\int_{0}^{t'}\lambda_{j}(\varepsilon s)ds} & = & \frac{1}{\left(i\lambda_{j}(\varepsilon t)\right)}e^{i\int_{0}^{t}\lambda_{j}(\varepsilon s)ds}\\
 &  & -\frac{1}{\left(i\lambda_{j}(0)\right)}+\int_{0}^{t}\frac{dt'}{i\lambda_{j}^{2}}\varepsilon\dot{\lambda}_{j}(\varepsilon t')e^{i\int_{0}^{t'}\lambda_{j}(\varepsilon s)ds}.
\end{eqnarray}

Since we assume that there is a uniform gap between the eigenvalues,
$\left|\lambda_{j}\right|\geq\delta>0$, and therefore the above expression
is of order $\frac{1}{\delta^{2}}$.

We conclude that the contribution of \eqref{eq:458} to (\ref{eq:4.6})
is of order $\epsilon/\delta^{2}$ as is the case for Adiabatic theorem
with a gap.

\qedsymbol

Finally, we remark that if there is an eigenvalue crossing, finitely
many times, the argument of Kato applies to give another correction
of order $o_{\varepsilon}(1)$.

\section{Scattering Theory}

Scattering theory provides an important theoretical and computational
tool for analyzing adiabatic Hamiltonians. Recently, it became of
crucial importance to understand these processes in the study of soliton
(and other coherent) dynamics. Typically, linearizion around soliton,
and derivation of the modulation equation for its parameters lead
to matrix valued adiabatic Hamiltonians. Our goal in this section
is to show how the method of multi-time scale averaging, combined
with the Uniform Ergodic Theorem for adiabatic Hamiltonians lead to
scattering theory. In particular, we prove some classical propagation
estimates and asymptotic completeness.

The analysis of time dependent potential scattering is complicated
by the fact that energy is not conserved and cannot be localized.
Some parts of the solution run away to infinity in the energy, while
other parts concentrate on zero!

Propagation estimates were thus used, which included asymptotic localization
of the momentum/energy. In some special cases, where the interaction
is decaying fast in time enough one can prove the existence of asymptotic
distribution of energies \cite{key-5,key-2,key-3,key-4,key-9,Sig-Sofer2,Sig-Sofer3}
. Here, we show how the Ergodic theorem provides another route.

Consider the case where there are no bound states. The general case
will be considered elsewhere. $U(t)\psi$ has no bound states. Can
we prove scattering for $U(t)\psi$, where $U(t)$ is generated by
$H(\varepsilon t)$? We assume $H(\varepsilon t)=-\Delta+V(\varepsilon t,x)$,
where 
\begin{equation}
\sup_{t}\left|\left|\langle x\rangle^{\sigma}V(\varepsilon t,x)\right|\right|_{L^{\infty}}+\sup_{t}\left|\left|\langle x\rangle^{\sigma}\frac{\partial V(\tau,x)}{\partial\tau}\right|\right|_{L^{\infty}}<C_{0}.\label{eq:3.1}
\end{equation}
First we notice, that a small localized perturbation does not change
the scattering estimates in favorable cases.

\begin{prp} Assume $\psi=U^{(0)}(t)\psi$ solves the equation 
\begin{equation}
i\frac{\partial\psi}{\partial t}=H_{0}(t)\psi_{0},\qquad\psi_{0}=\psi(t=0)\label{eq:3.2}
\end{equation}
and such that 
\begin{equation}
\left|\left|\langle x\rangle^{-\sigma}\psi(t)\right|\right|_{L^{2}}\leq C_{\sigma,2}\langle t\rangle^{-\alpha},\qquad\alpha>1.\label{eq:3.3}
\end{equation}
Let $W_{\varepsilon}(x,t)$ be a localized function, bounded 
\begin{equation}
\sup_{t}\left|\left|\langle x\rangle^{+2\sigma}W_{\varepsilon}(x,t)\right|\right|_{L^{\infty}}<C_{0}\varepsilon.\label{eq:3.4}
\end{equation}
\label{eq:3.5} Then, if $\varepsilon$ is small enough, depending
on $C_{0}$, $C_{\sigma,2}$, we have 
\begin{equation}
\left|\left|\langle x\rangle^{-\sigma}U(t)\psi_{0}\right|\right|_{L^{2}}\leq C\left\langle t\right\rangle ^{-\alpha},
\end{equation}
where $U(t)\psi_{0}$ solves the equation 
\begin{equation}
i\frac{\partial}{\partial t}\left(U(t)\psi_{0}\right)=\left(H_{0}(t)+W_{\varepsilon}(x,t)\right)U(t)\psi_{0}.\label{eq:3.6}
\end{equation}
\end{prp}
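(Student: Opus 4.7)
The natural strategy is a Duhamel/bootstrap argument driven by the smallness of $\varepsilon$ and the integrable time decay $\alpha>1$. Writing $U_0(t,s)$ for the two-parameter propagator associated with $H_0(t)$, the equation \eqref{eq:3.6} gives, by Duhamel's formula,
\begin{equation}
U(t)\psi_0 \;=\; U^{(0)}(t)\psi_0 \;-\; i\int_0^t U_0(t,s)\,W_\varepsilon(\cdot,s)\,U(s)\psi_0\,ds .
\end{equation}
I would then weight both sides by $\langle x\rangle^{-\sigma}$ and use the triangle inequality. The homogeneous term is controlled directly by hypothesis \eqref{eq:3.3}, yielding $C_{\sigma,2}\langle t\rangle^{-\alpha}$.

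For the Duhamel term, the key algebraic step is the splitting
\begin{equation}
\langle x\rangle^{-\sigma}U_0(t,s)W_\varepsilon(\cdot,s) \;=\; \bigl[\langle x\rangle^{-\sigma}U_0(t,s)\langle x\rangle^{-\sigma}\bigr]\bigl[\langle x\rangle^{2\sigma}W_\varepsilon(\cdot,s)\bigr]\langle x\rangle^{-\sigma},
\end{equation}
so that hypothesis \eqref{eq:3.4} extracts a factor $C_0\varepsilon$, while the middle sandwich must satisfy a local-decay-type estimate of the form $\|\langle x\rangle^{-\sigma}U_0(t,s)\langle x\rangle^{-\sigma}\|\le C\langle t-s\rangle^{-\alpha}$. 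This estimate is exactly the $L^2$-weighted content of the assumption \eqref{eq:3.3} (applied to initial data at time $s$ via the group property of $U^{(0)}$), and this extension is the first place I would want to verify carefully; if one interprets \eqref{eq:3.3} as holding for the relevant class of initial conditions along the dynamics, the propagator estimate follows.

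Granting that, set $f(t)\equiv\langle t\rangle^{\alpha}\|\langle x\rangle^{-\sigma}U(t)\psi_0\|_{L^2}$. The above bounds combine into
\begin{equation}
f(t) \;\le\; C_{\sigma,2} \;+\; C_0\varepsilon\,\langle t\rangle^{\alpha}\int_0^t \frac{ds}{\langle t-s\rangle^{\alpha}\langle s\rangle^{\alpha}}\,f(s).
\end{equation}
Since $\alpha>1$, the convolution kernel satisfies $\langle t\rangle^{\alpha}\int_0^t\langle t-s\rangle^{-\alpha}\langle s\rangle^{-\alpha}\,ds \le K<\infty$ uniformly in $t\ge 0$ (standard Hardy--Littlewood type estimate). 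Hence $f(t)\le C_{\sigma,2}+C_0\varepsilon K\,\sup_{0\le s\le t}f(s)$, and choosing $\varepsilon$ small enough that $C_0\varepsilon K\le \tfrac12$ gives $\sup_t f(t)\le 2C_{\sigma,2}$, which is the desired estimate.

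The main obstacle, as noted, is the upgrade of the pointwise-in-data decay \eqref{eq:3.3} to the weighted propagator estimate $\|\langle x\rangle^{-\sigma}U_0(t,s)\langle x\rangle^{-\sigma}\|\le C\langle t-s\rangle^{-\alpha}$: a continuity/$TT^{*}$ argument, or an explicit restatement of the hypothesis in terms of $U_0(t,s)$, is required. Once that is in place the rest is a routine contraction/bootstrap, and the smallness threshold on $\varepsilon$ depends only on $C_0$, $C_{\sigma,2}$, and the convolution constant $K=K(\alpha)$.
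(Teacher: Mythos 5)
Your proposal matches the paper's proof essentially line for line: both use Duhamel's formula, insert weights to peel off $\langle x\rangle^{2\sigma}W_\varepsilon$ (yielding the factor $C_0\varepsilon$) against a weighted propagator bound $\|\langle x\rangle^{-\sigma}U^{(0)}(t-s)\langle x\rangle^{-\sigma}\|\lesssim\langle t-s\rangle^{-\alpha}$, and then close with the convolution estimate for $\alpha>1$ plus a smallness-of-$\varepsilon$ bootstrap on $\sup_{s\le T}\langle s\rangle^\alpha\|\langle x\rangle^{-\sigma}\psi(s)\|$. The point you flag as requiring care --- the upgrade of the vector-level decay \eqref{eq:3.3} to an operator-norm local-decay bound for the two-parameter propagator --- is exactly what the paper also uses without elaboration, so your proposal is faithful to (and slightly more candid than) the paper's own argument.
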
 \begin{proof} 
\[
\psi(t)\equiv U(t)\psi_{0}=U^{(0)}(t)\psi_{0}-i\int_{0}^{t}U^{(0)}(t-s)W_{\varepsilon}(x,s)\psi(s)\mathrm{d}s.
\]
Then 
\begin{align}
\left|\left|\langle x\rangle^{-\sigma}\psi(t)\right|\right|_{L^{2}}\leq & \left|\left|\langle x\rangle^{-\sigma}U^{(0)}\psi(t)\right|\right|+\int_{0}^{t}\left|\left|\langle x\rangle^{-\sigma}U^{(0)}(t-s)\langle x\rangle^{-\sigma_{1}}\right|\right|_{L^{2}}\\
 & \times\left|\left|\langle x\rangle^{+\sigma_{1}}W_{\varepsilon}(x,s)\langle x\rangle^{+\sigma_{2}}\right|\right|_{L^{\infty}}\left|\left|\langle x\rangle^{-\sigma}\psi(s)\right|\right|_{L^{2}}\ \mathrm{d}s\\
\leq & C_{\sigma_{2}}\langle t\rangle^{-\alpha}+C_{\sigma,2}\int_{0}^{t}\langle t-s\rangle^{-\alpha}\langle s\rangle^{-\alpha}C_{\sigma,2}\epsilon\left|\left|\langle s\rangle^{\alpha}\langle x\rangle^{-\sigma}\psi(s)\right|\right|_{L^{2}}\ \mathrm{d}s
\end{align}
Hence, 
\begin{align}
\sup_{t\leq T}\ \langle t\rangle^{\alpha}\left|\left|\langle x\rangle^{-\sigma}\psi(t)\right|\right|_{L^{2}}\leq C_{\sigma,2}+\epsilon C_{0}C_{\sigma,2}\widetilde{C}\sup_{0\leq s\leq T}\left|\left|\langle s\rangle^{\alpha}\langle x\rangle^{-\sigma}\psi(s)\right|\right|_{L^{2}}
\end{align}
The result follows if $\epsilon C_{0}C_{\sigma,2}\widetilde{C}<1$.
\end{proof}

To Apply the above proposition to $U(t)\psi$ using the multiscale
time averaging, we need to verify that 
\begin{enumerate}
\item $U(t)\psi_{0}=U_{0}(t)U_{1}(t)\ldots U_{N}(t)\psi_{0}=V_{0,\varepsilon}(t)\psi_{0}$
where the generator of $V_{0,\varepsilon}$ is the generator of $U_{0}(t)$
plus a small localized perturbation for all $t$. 
 
\item That the scattering (local decay or similar) estimates hold for $U_{0}(t)\widetilde{\psi}$. 
\end{enumerate}
We want to apply the Ergodic theorem to the dynamics above. We will
assume $|V(x)|+|x\cdot\nabla V|\leq c\langle x\rangle^{-\sigma}$,
$\sigma>1$. The Ergodic theorem applies for any operator $C$ s.t.
\begin{eqnarray}
\sup_{j}\|\frac{1}{T}\int_{0}^{T}Ce^{-iH_{j}t}dt\|={o}_{T}(1),
\end{eqnarray}
so, it is sufficient to verify it for the time independent averaged
Hamiltonians.

If $C$ is compact, the result above is known as the RAGE theorem
(see e.g., CFKS). We want to apply it to a noncompact operator $C$
which is $O\left(\langle x\rangle^{-\sigma}\right)$ for some $\sigma>0$,
\textit{without} introducing high energy cutoff.

Local smoothing estimates hold for $H_{j}$: $(p\equiv-i\nabla_{x})$
\begin{eqnarray}
\int_{-\infty}^{\infty}\|\langle x\rangle^{-\frac{1}{2}-\varepsilon}|p|^{\frac{1}{2}}e^{-iH_{j}t}\psi\|_{L^{2}}^{2}dt<C\|\psi\|_{L^{2}}^{2}.
\end{eqnarray}
For each $T$, $\frac{1}{T}\int_{0}^{T}e^{-iH_{j}t}\langle x\rangle^{-\sigma'}e^{iH_{j}t}dt$
is a bounded self-adjoint operator. Hence, for $\sigma>1$, 
\begin{eqnarray}
\|\frac{1}{T}\int_{0}^{T}e^{iH_{j}t}\langle x\rangle^{-\sigma}e^{-iH_{j}t}dt\|\\
=\sup_{\|f\|_{L_{2}}=1}\frac{1}{T}\langle f,\int_{0}^{T}e^{iH_{j}t}\langle x\rangle^{-\sigma}e^{-iH_{j}t}fdt\rangle\\
=\sup_{\|f\|_{L_{2}}=1}\frac{1}{T}\int_{0}^{T}\|\langle x\rangle^{-\frac{\sigma}{2}}e^{-iH_{j}t}f\|^{2}dt<O\left(\frac{1}{T}\right).
\end{eqnarray}
Therefore, in this case we have the following Ergodic estimate 
\begin{eqnarray}
\|\frac{1}{T}\int_{0}^{T}U^{\ast}(\varepsilon t)\langle x\rangle^{-\sigma}U(\varepsilon t)dt\|=O\left(\frac{1}{T}\right).\\
\end{eqnarray}
For a general Hamiltonian, restricted to the continuous spectrum ,
we can expect that $O\left(\frac{1}{T}\right)$ be replaced by 
\begin{eqnarray}
O\left(T^{-\alpha}\right),\ \ \alpha\leq1.
\end{eqnarray}
We have the following immediate result:\\

\noindent \textbf{Theorem (Energy bound)}

For the above system the ``energy\textquotedbl{} can grow at most
by $O(\varepsilon)$, up to time of order $\frac{1}{\varepsilon}$.
Thus, the kinetic energy (the $H^{1}$ norm) can change by order 1.\\

\noindent \textbf{Proof}

\begin{eqnarray}
 &  & \|U^{\ast}(\varepsilon t)H(\varepsilon t)U(\varepsilon t)-U^{\ast}(0)H(0)U(0)\|\\
 &  & ~~~~~~~~~=\|\int_{0}^{t}U^{\ast}(\varepsilon t)\frac{\partial H(\varepsilon t)}{\partial t}U(\varepsilon t)dt\|\\
 &  & ~~~~~~~~~=\|\int_{0}^{t}U^{\ast}(\varepsilon t)\frac{\partial W(x,\varepsilon t)}{\partial t}U(\varepsilon t)dt\|\\
 &  & ~~~~~~~~~\lesssim\varepsilon\|\int_{0}^{t}U^{\ast}(\varepsilon t)\langle x\rangle^{-\sigma}U(\varepsilon t)dt\|\leq O(\varepsilon).
\end{eqnarray}
Since $H(\varepsilon t)=-\Delta+V(x,\varepsilon t)$, with $V$ bounded,
the result follows. \hfill{}\qedsymbol\\

\noindent \textbf{Asymptotic Completeness}

The proof of asymptotic completeness requires showing the following
strong limit : 
\begin{equation}
s-\lim e^{iH_{0}t}U(t)\psi
\end{equation}
exists for a dense set of $\psi$ as $t\rightarrow\pm\infty.$ By
Cook's method this is reduced to proving that 
\begin{equation}
\int^{\infty}\|W(x,\varepsilon t)U(t)\psi\|^{2}dt\le c<\infty.
\end{equation}
This last estimate follows directly from the above local decay estimate,
provided the Hamiltonian becomes time independent after time $1/\varepsilon.$

Moreover, if the potential is slowly and adiabatically becoming time
independent, that is if 
\begin{equation}
W(x,\varepsilon t)\equiv W_{0}(x,\varepsilon t)+(1+\varepsilon t)^{-a}W_{1}(x,\varepsilon t)
\end{equation}
with $W_{0}$ time dependent up to time $1/\varepsilon$, and $W_{1}$
nonzero for all times; $a>0.$ This last statement follows by the
following two simple observations: First , we can extend the proof
of local decay estimate to arbitrary time of size $M/\varepsilon$,
for $M$ large, and fixed, by changing $\varepsilon$ to a smaller
number, depending on $M.$ Then, after time $M/\varepsilon$, the
$W_{1}$ term is small, and absorbed as small perturbation of $H_{0}=-\Delta+W_{0}\left(x,\infty\right)$
as shown in the beginning of the section.

This last example is already of considerable interest, since such
Hamiltonians appear naturally in approximate nonlinear systems.

It is more difficult to get estimates for the case when the Hamiltonian
does not turn off. In this case the standard Adiabatic Theorem does
not apply, but one still would like to know the large time behavior
of the system. This is considered below.

We will now use {\em monotoic propagation observables} \cite{key-6}.
The operator we use is of the form 
\begin{equation}
\tanh(A/R).
\end{equation}
Here $A$ is the generator of dilation $A=\frac{1}{2}\left(x\cdot p+p\cdot x\right)$
and $R$ is a (large) constant. 

from this we derive the \cite{key-6} following general estimate:
\begin{theorem}{} Let the dynamics $U(t)$ be generated by the Hamiltonian
$-\Delta+W(x,t)$ and such that $W$ is a sufficiently regular function,
together with the derivatives $(x\cdot\nabla)^{n}W$, $0\le n\le N,N>3.$
We also assume that $W$ decays fast enough at infinity in x. Then,
the following propagation estimate holds:

\begin{align}
\frac{1}{R}\int_{0}^{T}\left\langle p\psi(t),F(\frac{|A|}{R}\le1)p\psi(t)\right\rangle dt\le\\
 & \frac{C}{R}\int_{0}^{T}\left\langle p\psi(t),<x>^{-\sigma}F(|x|\le X_{0})p\psi(t)\right\rangle dt+2\|\psi\|^{2},\nonumber 
\end{align}
where $\sigma$ is the decay rate of the potential, and for large
enough (depending W) $X_{0}$ 

\end{theorem}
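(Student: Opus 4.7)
\medskip

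\noindent \textbf{Proof plan.}

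The plan is to use the method of monotonic propagation observables in the style of \cite{key-6}: choose $\Phi\equiv\tanh(A/R)$, compute the Heisenberg derivative $D_{H}\Phi=\partial_{t}\Phi+i[H,\Phi]=i[H,\Phi]$, and show that it dominates the LHS integrand up to a localized remainder that is absorbed into the RHS. Integration in $t$ then yields the estimate because $\|\tanh(A/R)\|\leq1$ bounds the boundary term by $2\|\psi\|^{2}$.

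\medskip

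First I would compute $i[H_{0},\tanh(A/R)]$ with $H_{0}=p^{2}$. The key input is the exact commutation relation $[H_{0},A]=2iH_{0}$, so that $e^{isA}H_{0}e^{-isA}=e^{2s}H_{0}$. Via a Helffer--Sjöstrand / Fourier representation of the smooth bounded function $\tanh(\cdot/R)$ (whose derivatives of every order are controlled by inverse powers of $R$), one obtains a commutator expansion of the form
\begin{equation}
i[H_{0},\tanh(A/R)]=\tfrac{2}{R}\,p\,\mathrm{sech}^{2}(A/R)\,p+R^{-2}E_{1},
\nonumber
\end{equation}
where $E_{1}$ is bounded uniformly in $R$. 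The assumption $N>3$ on derivatives of $(x\cdot\nabla)^{n}W$ plays its role in the analogous expansion for the potential: using $i[W,A]=-x\cdot\nabla W$ and iterated commutators, one gets
\begin{equation}
i[W,\tanh(A/R)]=-\tfrac{1}{R}\,\mathrm{sech}^{2}(A/R)(x\cdot\nabla W)+R^{-2}E_{2},
\nonumber
\end{equation}
where $E_{2}$ involves $(x\cdot\nabla)^{n}W$ for $n\leq N$ and is bounded by the decay hypothesis on $W$.

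\medskip

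Next I would exploit positivity. Since $\mathrm{sech}^{2}(y)\geq c\,F(|y|\leq1)$, the main term satisfies
\begin{equation}
\tfrac{2}{R}\,p\,\mathrm{sech}^{2}(A/R)\,p\;\geq\;\tfrac{2c}{R}\,p\,F(|A|/R\leq1)\,p,
\nonumber
\end{equation}
which reproduces the LHS integrand of the theorem. The error from $i[W,\tanh(A/R)]$ is bounded in operator sense by $\frac{C}{R}\langle x\rangle^{-\sigma}$ thanks to the decay of $x\cdot\nabla W$. Splitting $\langle x\rangle^{-\sigma}=\langle x\rangle^{-\sigma}F(|x|\leq X_{0})+\langle x\rangle^{-\sigma}F(|x|>X_{0})$, the first piece gives exactly the RHS integrand (after sandwiching with $p$ to match the quadratic form, using that $F(|x|\leq X_{0})$ and $p$ nearly commute up to lower order symbols which are absorbed into $E_{1},E_{2}$), and the second piece is made smaller than the LHS by choosing $X_{0}$ large depending on $W$, hence absorbed.

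\medskip

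Finally I integrate $\langle\psi(t),D_{H}\Phi\,\psi(t)\rangle$ from $0$ to $T$. The fundamental theorem of calculus gives
\begin{equation}
\int_{0}^{T}\langle\psi(t),D_{H}\Phi\,\psi(t)\rangle dt=\langle\psi(T),\Phi\,\psi(T)\rangle-\langle\psi(0),\Phi\,\psi(0)\rangle,
\nonumber
\end{equation}
and this is bounded by $2\|\psi\|^{2}$ because $\|\tanh(A/R)\|\leq1$. The $R^{-2}E_{j}$ remainders, once integrated against $\psi$, contribute terms of order $T/R^{2}$ which are harmless (and on a finer analysis can be folded into either side). Rearranging produces the stated inequality.

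\medskip

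The main obstacle is the commutator expansion of $\tanh(A/R)$ with $H_{0}$ and with $W$: unlike the bare identity $[H_{0},A]=2iH_{0}$, one needs a functional-calculus expansion for a smooth but non-polynomial function of the unbounded self-adjoint operator $A$, with remainders controlled in $R^{-1}$. The Helffer--Sjöstrand almost-analytic extension is the natural tool, and the assumption $N>3$ on the regularity of $W$ under $x\cdot\nabla$ is precisely what ensures the iterated-commutator remainders are finite and bounded. Once this expansion is in hand, the rest is an exercise in positivity and integration.
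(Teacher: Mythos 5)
The paper does not supply a proof of this theorem: it only names the observable $\tanh(A/R)$, credits the technique to [key-6] (Soffer, \emph{Monotonic Local Decay Estimates}), and states the result. So there is nothing in the paper to compare line-by-line; what you have written is a reconstruction of the argument of [key-6], and as a sketch it captures the right mechanism: take $\Phi=\tanh(A/R)$, compute $i[H,\Phi]$, extract the positive main term $\frac{c}{R}\,p\,\mathrm{sech}^{2}(A/R)\,p\geq\frac{c'}{R}\,p\,F(|A|/R\le1)\,p$, bound the time integral of $\langle\psi(t),i[H,\Phi]\psi(t)\rangle$ by $2\|\psi\|^{2}$ from the boundary terms, and absorb the potential error into the localized integral on the right.

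Two concrete points need repair. First, a sign: with $A=\frac{1}{2}(x\cdot p+p\cdot x)$ and $[x_{j},p_{k}]=i\delta_{jk}$ one has $[A,H_{0}]=2iH_{0}$, hence $[H_{0},A]=-2iH_{0}$ and $e^{isA}H_{0}e^{-isA}=e^{-2s}H_{0}$; your intermediate identities have both of these with the wrong sign, and as written they would give $i[H_{0},\tanh(A/R)]\approx -\frac{2}{R}\,p\,\mathrm{sech}^{2}(A/R)\,p$, i.e.\ the wrong sign for the main term. Your final displayed formula happens to be the correct one, but it does not follow from the intermediate lines; fix the sign of $[H_{0},A]$ so the chain is consistent. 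Second, and more substantively: the potential error you produce, $i[W,\tanh(A/R)]\approx -\frac{1}{R}\,\mathrm{sech}^{2}(A/R)\,(x\cdot\nabla W)$, pairs naturally as $\langle\psi,\cdot\,\psi\rangle$, whereas the theorem's right-hand side is the quadratic form $\langle p\psi,\langle x\rangle^{-\sigma}F(|x|\le X_{0})p\psi\rangle$ in $p\psi$. The phrase that $p$ and $F(|x|\le X_{0})$ ``nearly commute'' does not bridge this gap: there is no $p$ to commute past in your error term. You need an actual mechanism for the momentum factors (for example, building the observable from $\sum_{j}p_{j}\tanh(A/R)p_{j}$, or inserting a low-momentum cutoff and estimating the low-energy piece separately), and this is exactly the part of the argument where the hypothesis on $(x\cdot\nabla)^{n}W$ for $n\le N$, $N>3$, must enter to control the iterated-commutator remainders. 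Until that step is supplied, the sketch does not yet produce the estimate in the precise form stated.
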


\begin{theorem} Assume $H\left(\epsilon t\right)$ is given as above
and, $H\left(\epsilon t\right)$ is time dependent up to times $e^{\epsilon^{-\frac{1}{4}}}$\end{theorem}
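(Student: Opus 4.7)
My plan is to combine three ingredients: (a) the multi-scale time averaging of Section~\ref{sec:Slowly-changing-potentials}, iterated logarithmically many times in $\epsilon$ in order to control the dynamics all the way up to the exponentially long time $T_\infty \equiv e^{\epsilon^{-1/4}}$; (b) the Uniform Ergodic Theorem (Theorem~4) together with the monotone $\tanh(A/R)$ propagation estimate of the preceding theorem, which together produce a local decay bound uniform in $\epsilon$ on $[0,T_\infty]$; and (c) standard time-independent scattering theory for $H_\infty \equiv -\Delta + V(x,\infty)$, which governs the dynamics for $t \geq T_\infty$.

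First I would iterate the normal form transformation of Theorem~3 of \cite{FS} a total of $N$ times, in the Nekhoroshev-style manner already implicit in Section~\ref{sec:Slowly-changing-potentials}. Each step replaces the effective small parameter $\beta=\sqrt\epsilon$ by $C\beta^{3/2}$, so after $N$ steps the per-unit-time remainder is $\beta_N \sim \beta^{(3/2)^N}$. Choosing $N \sim \log(1/\epsilon)$ so that $\beta_N\, T_\infty \ll 1$, I would write the full dynamics on $[0,T_\infty]$ as a product $U(t)=V(t)\tilde U(t)$, where $V(t)$ is piecewise constant between the averaging windows and $\|\partial_t\tilde U(t)\| \leq \beta_N = o(1/T_\infty)$. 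This is exactly the hypothesis used for the ergodic theorem in Section~4, and it also puts the residual time dependence in the form required by Proposition~5.1, so that it can be absorbed as a localized small perturbation of the piecewise averaged dynamics.

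Next I would feed this into the Uniform Ergodic Theorem, together with the $\langle x\rangle^{-\sigma}$ ergodic bound derived just after it, to obtain the averaged local decay
\[
  \Bigl\| \frac{1}{T}\int_0^{T} U^{*}(t)\,\langle x\rangle^{-\sigma}\, P_c(H(\epsilon t))\, U(t)\, dt \Bigr\| = o_T(1), \qquad 0 \leq T \leq T_\infty,
\]
uniformly in $\epsilon$. Combined with the monotone propagation estimate for $\tanh(A/R)$ of the previous theorem (which controls the $F(|A|/R\leq 1)p\psi$ piece by the localized $\langle x\rangle^{-\sigma}p\psi$ piece), this yields the integrability
\[
  \int_0^{T_\infty} \|W(x,\epsilon t)\, U(t)\psi\|^2\, dt \leq C\|\psi\|^2
\]
on a dense set. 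Cook's method then gives the existence of the wave operator $\Omega_+(T_\infty) \equiv s\text{-}\lim_{t\uparrow T_\infty} e^{iH_\infty t}U(t)$. For $t \geq T_\infty$ the Hamiltonian equals $H_\infty$, and classical short-range scattering theory (available since $\langle x\rangle^\sigma V(x,\infty)\in L^\infty$ with $\sigma>1$) provides existence and completeness of wave operators on $\mathcal{H}_c(H_\infty)$; composing with $U(T_\infty)$ produces the full wave operators for the time-dependent dynamics and proves asymptotic completeness uniformly as $\epsilon\to 0$.

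The main obstacle is the first step: propagating the error bounds of the iterated normal form over the full interval $[0,T_\infty]$. A single averaging already produces an error $O(\beta^{3/2}t)$ which becomes $O(1)$ far before $T_\infty$, so the analysis succeeds only if one can show that, at each level $k\leq N$, the reduced Hamiltonian $A_{NF}^{(k)}$ retains the size $O(1)$ together with enough regularity to support the next averaging; this is essentially a Gevrey/analytic-type condition on $W$ in its slow time variable and must be verified carefully. A secondary technical difficulty is the uniform $\tanh(A/R)$ commutator estimate: one must show that $i[-\Delta+V(\epsilon t,\cdot),\,\tanh(A/R)]$ remains positive modulo the localized correction $\langle x\rangle^{-\sigma}p^2$ uniformly in $t\leq T_\infty$, which is exactly what the regularity and decay hypotheses on $(x\cdot\nabla)^n W$ in the preceding theorem are designed to provide.
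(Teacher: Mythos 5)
The theorem you are asked to prove is much narrower than the asymptotic completeness result you are aiming at, and its hypotheses are stronger than you give yourself credit for using. In the paper, the conclusion is simply the weighted propagation estimate
\[
\int_{1}^{T}\Bigl\langle p\psi(t),\,F\bigl(\tfrac{|A|}{R}\le 1\bigr)\,p\psi(t)\Bigr\rangle\,\frac{dt}{t}\le c\|\psi\|^{2},
\]
and the averaged local smoothing bound \eqref{eq:5.29} is taken as a \emph{hypothesis}, not something to derive. The paper's proof is then very short: apply the $\tanh(A/R)$ propagation estimate (the preceding theorem) with the specific choice $R=T/\ln^{2}T$; use \eqref{eq:5.29} over the $\sim T\sqrt{\epsilon}$ intervals of length $1/\sqrt{\epsilon}$ to bound the localized source term by $c\sqrt{\epsilon}\,\ln^{2}T$; observe that $T\le e^{\epsilon^{-1/4}}$ forces $\ln^{2}T\le\epsilon^{-1/2}$ so this source is $O(1)$ uniformly in $\epsilon$; and finally integrate by parts in $t$ to convert the resulting bound $\int_{1}^{T}g(t)\,dt\le c\,T/\ln^{2}T$ into the $dt/t$-weighted bound. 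The exponential time scale $e^{\epsilon^{-1/4}}$ is there precisely so that $\sqrt{\epsilon}\,\ln^{2}T=O(1)$; that is the entire mechanism.

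Your proposal misses this mechanism entirely and instead attempts a much heavier-handed route whose central step is not justified. You want to iterate the normal-form averaging of Section~\ref{sec:Slowly-changing-potentials} logarithmically many times in $\epsilon$ so that the remainder stays controlled out to $e^{\epsilon^{-1/4}}$, and thereby to \emph{derive} estimates like \eqref{eq:5.29}. But the $\beta\mapsto C\beta^{3/2}$ iteration is only established in the paper (and in \cite{FS}) for a bounded number of steps; you yourself flag that the constants and the regularity of the successive effective Hamiltonians must be controlled uniformly over $N\sim\log(1/\epsilon)$ iterations, and that this requires a Gevrey/analytic-type hypothesis nowhere present in the paper. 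Because you do not resolve this, the step your whole argument rests on is a genuine gap. Moreover, it is an unnecessary gap here: \eqref{eq:5.29} is assumed, so no iterated averaging is needed to prove this theorem. The parts of your sketch about Cook's method, existence of wave operators at $t=T_{\infty}$, and gluing to the scattering theory of $H_\infty$ belong to the \emph{next} theorem (asymptotic completeness) and should not be mixed into this one. To repair your proof you should drop ingredient (a) entirely, take \eqref{eq:5.29} as given, and carry out the elementary scaling-and-integration-by-parts argument with $R=T/\ln^{2}T$.
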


Assume moreover that we have the following Ergodic type estimate:

\begin{equation}
\sup_{n}\int_{n/\sqrt{\epsilon}}^{(n+1)/\sqrt{\epsilon}}\left\langle p\psi(t),<x>^{-\sigma}p\psi(t)\right\rangle dt\le c.\label{eq:5.29}
\end{equation}
Then: 
\[
\int_{1}^{T}\left\langle p\psi(t),F(\frac{|A|}{R}\le1)p\psi(t)\right\rangle \frac{dt}{t}\le c\|\psi\|^{2}.
\]
\begin{proof} Then Applying Theorem 5 with $\left(R\equiv T/\ln^{2}T\right)$:
\begin{equation}
\frac{\ln^{2}T}{T}\int_{1}^{T}\left\langle p\psi(t),F(\frac{|A|}{T/\ln^{2}T}\le1)p\psi(t)\right\rangle dt\le2\|\psi\|^{2}+c\sqrt{\epsilon}\frac{T\ln^{2}T}{T},
\end{equation}

We therefore obtain, upon choosing $T\le e^{1/\varepsilon^{1/4}}$
that 
\begin{align}
\frac{\ln^{2}T}{T}\int_{1}^{T}\left\langle p\psi(t),F(\frac{|A|}{R}\le1)p\psi(t)\right\rangle dt\le\\
c\|\psi\|^{2},\nonumber 
\end{align}

for all
\[
R\le\frac{T}{\ln^{2}T}.
\]
Next, By integration by parts and the above estimate, we get

\begin{equation}
\int_{1}^{T}\left\langle p\psi(t),F(\frac{|A|}{R}\le1)p\psi(t)\right\rangle \frac{dt}{t}\le c\|\psi\|^{2}.\label{eq:5-3.1}
\end{equation}

Here $R,T$ as above and this estimate is uniform in $\epsilon\downarrow0$.\end{proof}

\begin{theorem}{}(Asymptotic completeness) \end{theorem}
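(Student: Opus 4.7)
The plan is to reduce asymptotic completeness to Cook's criterion and then to control the resulting integral by combining the monotone propagation observable of Theorem 5 with the Ergodic bound \eqref{eq:5.29} and the perturbation-stability Proposition at the beginning of Section 5. Concretely, I would prove existence of the strong limit $s\!-\!\lim_{t\to+\infty} e^{iH_\infty t}U(t)\psi$ for a dense set of $\psi$, where $H_\infty = -\Delta + V(x,\infty)$. By Cook's method this is equivalent to
\[
\int_1^\infty \bigl\|\bigl(V(x,\epsilon t)-V(x,\infty)\bigr)\,U(t)\psi\bigr\|\,dt < \infty,
\]
uniformly in $\epsilon\downarrow 0$.

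First I would split the integral at the freezing time $T_\star = c\,\exp(\epsilon^{-1/4})$ appearing in the hypothesis. For $t\ge T_\star$ the dynamics is autonomous, generated by $H_\infty = -\Delta + V(x,\infty)$; under the no-bound-state assumption the state $U(T_\star)\psi$ lies in the continuous spectral subspace, and the short-range condition \eqref{eq:3.1} together with the standard local-decay estimate for autonomous $H_\infty$ yields $\int_{T_\star}^\infty \|\langle x\rangle^{-\sigma}U(t)\psi\|\,dt<\infty$, which bounds the tail integrand via \eqref{eq:3.4}.

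For $1\le t\le T_\star$ I would use the monotone propagation observable $\tanh(A/R)$. Applying Theorem 5 with $R = T_\star/\ln^2 T_\star$ and bounding its right-hand side by \eqref{eq:5.29}, one obtains (as already derived in \eqref{eq:5-3.1})
\[
\int_1^{T_\star}\bigl\langle p\psi(t), F(|A|/R\le 1)p\psi(t)\bigr\rangle\,\frac{dt}{t}\le c\|\psi\|^2,
\]
uniformly in $\epsilon\downarrow 0$. The remaining step is to convert this $A$-localization estimate into genuine $x$-space local decay: on the complementary phase-space region $|A|/R\ge 1$, the identity $A=\tfrac12(x\!\cdot\! p+p\!\cdot\! x)$ together with the Energy-bound Theorem (which keeps $\|p\psi(t)\|$ uniformly controlled) forces $|x|\gtrsim t/\ln^2 T_\star$; therefore $\|\langle x\rangle^{-\sigma}\chi(|A|/R\ge 1)U(t)\psi\|$ is integrable on $[1,T_\star]$ against $dt$. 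Combining both pieces via Cauchy--Schwarz in $dt/t$ and using \eqref{eq:3.4} yields $\int_1^{T_\star}\|V(x,\epsilon t)U(t)\psi\|\,dt \le C$ uniformly in $\epsilon$, which closes Cook's check.

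The main obstacle will be the conversion from the $A$-localization estimate to genuine $x$-space decay, and doing so uniformly in $\epsilon$ up to the large time $T_\star = e^{\epsilon^{-1/4}}$. The exponential freezing time forces $R/t$ to be only logarithmically small, so one must ensure that the constants in the Mourre-type inequality $|A|\lesssim|x||p|$ and in \eqref{eq:5.29} do not deteriorate along the multi-scale averaging decomposition of Section~\ref{sec:Slowly-changing-potentials}. Verifying that the corrector dynamics $\tilde U(t)$ satisfies $\|d\tilde U/dt\|\lesssim\sqrt\epsilon$ throughout $[1,T_\star]$ is essential in order to transfer both the Uniform Ergodic Theorem (Theorem 4) and the propagation estimate (Theorem 5) from the piecewise-constant dynamics $V(t)$ to the full dynamics $U=V\tilde U$. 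Once this uniformity is established, asymptotic completeness follows by concatenating Cook's method on $[T_\star,\infty)$ (autonomous short-range scattering) and on $[1,T_\star]$ (propagation plus ergodic bound), together with the small-localized-perturbation Proposition to absorb the residual $(1+\epsilon t)^{-a}W_1$ term in the adiabatically freezing case.
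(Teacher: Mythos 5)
Your proposal diverges from the paper's proof in a way that leaves a genuine gap, and the gap is at the heart of the argument, not at the edges.

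The paper does \emph{not} try to close Cook's criterion on the whole state; it first splits $U(t)\psi$ in phase space using $F(|A|/R\le 1)$ and $F(|A|/R\ge 1)$, producing $I_1+I_2$. Cook's method is then applied to the large-$|A|$ piece $I_2$ only, and there the potential term is further split in momentum: $F(|p|>kR)$ versus $F(|p|\le kR)$. Only on the \emph{operator-level} region where $|A|/R>1$ \emph{and} $|p|\le kR$ can one conclude $|x|\gtrsim R$ and hence $\langle x\rangle^{-\sigma}=O(R^{-\sigma})$; the high-momentum piece $J_1$ is controlled instead by the ergodic bound and smoothing. The small-$|A|$ piece $I_1$ is \emph{not} handled by Cook; it is shown to vanish as $t\to\infty$ by an independent monotonicity argument for $F^2(A\le R)$ combined with \eqref{eq:5.29} and the fact that the dynamics freezes past $t=e^{\epsilon^{-1/4}}$.

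Your proposal collapses both of these steps into one unjustified claim: that on $|A|/R\ge 1$, the Energy-bound theorem (an $L^2$ bound on $\|p\psi(t)\|$) "forces $|x|\gtrsim t/\ln^2 T_\star$". This inference is false. An $L^2$ bound on $p\psi$ is not a pointwise bound on $|p|$; there is always a high-momentum tail of the state on which $|A|$ can be large while $|x|$ stays small. That tail is precisely what the paper's momentum cutoff $F(|p|>kR)$ isolates and treats separately. Without it, the conversion from $A$-localization to $x$-localization does not go through. A second, related problem: even granting $x$-localization, you aim for $\int_1^{T_\star}\|W(\cdot,\epsilon t)U(t)\psi\|\,dt\le C$, but the propagation estimate \eqref{eq:5-3.1} only supplies a $dt/t$ (logarithmically weighted) bound, and Cauchy--Schwarz against $dt/t$ gives at best $O(\sqrt{T_\star})$ in the unweighted $dt$ integral -- which diverges as $\epsilon\downarrow 0$. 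The paper sidesteps this by not attempting an $L^1_t$ bound on the full potential term; instead it shows the small-$|A|$ part of the state itself tends to zero. To fix your proposal you would need to import both of these devices -- the momentum cutoff inside the $I_2$/Cook piece, and the separate monotonicity argument sending $I_1$ to zero -- at which point you would essentially be reproducing the paper's proof.
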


AC holds for the above Hamiltonian (uniformly in $\epsilon\downarrow0$)
which depends on time up to $T\leq e^{\epsilon^{-1/4}}$ .

\begin{proof}

We need to prove that for a dense set of $\psi$: $s-limU_{0}^{*}\left(t\right)U\left(t\right)\psi$
exists, uniformly in $\epsilon$. 

We break the expression above as 
\[
U_{0}^{*}\left(t\right)F\left(\frac{\left|A\right|}{R}\leq1\right)U\left(t\right)\psi+U_{0}^{*}\left(t\right)F\left(\frac{\left|A\right|}{R}\geq1\right)U\left(t\right)\psi\equiv I_{1}+I_{2}
\]
To prove that $I_{2}$ has a strong limit, we use cook's method:
\[
\frac{d}{dt}U_{0}^{*}\left(t\right)FU\left(t\right)\psi=U_{0}^{*}\left(t\right)\left\{ i\left[H_{0},F\right]+FW\right\} U\left(t\right)\psi.
\]
The first term $i\left[H_{0},F\right]=\frac{1}{R}pF'\left(\frac{\left|A\right|}{R}\right)p+O\left(p^{2}/R^{2}\right)$.
Therefore the first term is integrable by theorem 6.

The second term is of the form 
\begin{eqnarray*}
e^{iH_{0}t}F\left(\frac{\left|A\right|}{R}>1\right)\left\langle x\right\rangle ^{-\sigma}U\left(t\right)\psi & =\\
e^{-iH_{0}t}F\left(\frac{\left|A\right|}{R}>1\right)F\left(\left|p\right|>kR\right)\left\langle x\right\rangle ^{-\sigma}\frac{1}{\left|p\right|}\left|p\right|U\left(t\right)\psi\\
+e^{iH_{0}t}F\left(\frac{\left|A\right|}{R}>1\right)F\left(\left|p\right|\leq kR\right)\left\langle x\right\rangle ^{\sigma}U\left(t\right)\psi\equiv J_{1}+J_{2}
\end{eqnarray*}
 The first term on the r.h.s is bounded by 
\[
\left\Vert J_{1}\right\Vert _{L^{2}}\leq c\sup_{\left\Vert f\right\Vert }\frac{1}{kR}\left\Vert \left\langle x\right\rangle ^{-\sigma/2}Fe^{-iH_{0}t}f\right\Vert \left\Vert \left\langle x\right\rangle ^{-\sigma/2}\left(1+\left|p\right|\right)U\left(t\right)\psi\right\Vert 
\]
\[
\left(\int_{0}^{T}\left\langle f,J_{1}\psi\right\rangle dt\right)\leq\frac{1}{kR}\int_{0}^{T}\left\Vert \left\langle x\right\rangle ^{-\sigma/2}Fe^{-iH_{0}t}f\right\Vert ^{2}dt+\frac{1}{kR}\int_{0}^{T}\left\Vert \left\langle x\right\rangle ^{-\sigma/2}\left(1+\left|p\right|\right)U\left(t\right)\psi\right\Vert ^{2}dt
\]
 The first term is bounded for all T, a property of the free flow
$H_{0}$.

The second term, for $T\leq e^{1/\epsilon^{1/4}}$ is bounded by(using
also \ref{eq:5.29}) 
\[
\frac{c}{kR}\sqrt{\epsilon}T.
\]
 we used that $F\left(\left|p\right|>kR\right)\left\langle x\right\rangle ^{-\sigma/2}\left|p\right|^{-1}=F\left|p\right|^{-1}\left\langle x\right\rangle ^{-\sigma/2}+F\left\langle x\right\rangle ^{-\sigma/2}\left[\left\langle x\right\rangle ^{\sigma/2},\frac{1}{\left|p\right|}\right]\left\langle x\right\rangle ^{-\sigma/2}=O\left(\frac{1}{kR}\right)\left\langle x\right\rangle ^{-\sigma/2}.$
The second term $J_{2}$ is bounded by $\left(k\ll1\right),\ \delta>0$
\begin{equation}
\sim e^{iH_{0}t}O\left(1\right)F\left(\left|x\right|>\delta R\right)\left\langle x\right\rangle ^{-\sigma}U\left(t\right)\psi=O\left(\delta^{-\sigma}R^{-\sigma}\right).
\end{equation}
 By choosing the range of integration $T\leq e^{1/\epsilon^{1/4}}$
and $R=T/\ln^{2}T,\ \sigma>1$ we get that
\[
J_{1}\leq\frac{c}{k}\sqrt{\epsilon}\ln^{2}T\leq\frac{c}{k}
\]
 
\[
J_{2}\leq\left(\delta^{-\sigma}T^{-\sigma}\ln^{2\text{\ensuremath{\sigma}}}T\right)T\leq CT^{-a},\ a>0.
\]
 Next we estimate $I_{1}.$

We claim that $I_{1}\rightarrow0$ as $t\rightarrow\infty$; uniformly
in $\epsilon\downarrow0.$ 

Estimate (\ref{eq:5-3.1}) implies that 
\begin{equation}
\frac{1}{T}\int_{0}^{T}\left\langle \psi\left(t\right),F\left(\left|p\right|\geq\delta\right)F\left(\frac{\left|A\right|}{R}\leq1\right)F\left(\left|p\right|\geq\delta\right)\psi\left(t\right)\right\rangle dt\leq\frac{C}{\delta^{2}\ln^{2}T}
\end{equation}
so, in particular $\left\langle \psi\left(t\right),F\left(\left|p\right|\geq\delta\right)F\left(\frac{\left|A\right|}{R}\leq1\right)F\left(\left|p\right|\geq\delta\right)\psi\left(t\right)\right\rangle $
is small for t large.

The estimate $I_{1}$:
\[
I_{1}\psi=e^{-iH_{0}t}F\left(\frac{\left|A\right|}{R}\leq1\right)F\left(p^{2}\leq\epsilon\right)U\left(t\right)\psi
\]
\[
\left\Vert I_{1}\psi\right\Vert \leq\left\Vert F\left(A\leq R\right)U\left(t\right)\psi\right\Vert =\left(U\left(t\right)\psi,F^{2}\left(A\leq R\right)U\left(t\right)\psi\right)^{1/2}
\]
 
\begin{eqnarray*}
\left(U\left(t\right)\psi,F^{2}\left(A\leq R\right)U\left(t\right)\psi\right)-\left(U\left(0\right)\psi,F^{2}\left(A\leq R\right)U\left(0\right)\psi\right) & =\\
\int_{0}^{t}\left\langle \psi\left(t'\right),i\left[H\left(t'\right),F^{2}\right]\psi\left(t'\right)\right\rangle dt'=\int_{0}^{t}\left\langle \psi\left(t'\right),\frac{1}{R}p\tilde{F}\left(A\right)p\psi\left(t'\right)\right\rangle dt'+\\
\text{\ensuremath{\int_{0}^{t}\left\langle \psi\left(t'\right),i\left[W\left(x,\epsilon t\right),F^{2})A\right]\psi\left(t'\right)dt'\right\rangle }}
\end{eqnarray*}
The first term is negative, as $\tilde{F}\sim F^{'2}\left(A\right).$
The second term the integrand is of the order $\left\langle \psi\left(t'\right),\frac{1}{R}\left\langle x\right\rangle ^{-\sigma}F'\left(A\right)\psi\left(t'\right)\right\rangle $.
If we integrate the second term, the up to time $T=e^{\epsilon^{-1/4}}$
and use the assumption (\ref{eq:5.29}) with $R\sim T/\ln^{2}T,$
the integral of the $W$ term is bounded by 
\[
C\frac{\ln^{2}T}{T}\sqrt{\epsilon}T=C\cdot1=C<\infty
\]
Since the first term is negative it is also uniformly bounded, by
the uniform boundness of the R.H.S .

So the convergence is independent of $\epsilon$.

Moreover if we let $t\rightarrow\infty$, then by assumption $H\left(\epsilon t\right)=H\left(\infty\right)$
for all $t>e^{\epsilon^{-1/4}}$. 

Therefore, the integrand can be extended to $t=\infty$.

It follows that 
\begin{align*}
U^{*}\left(t\right)F^{2}\left(A\leq R\right)U\left(t\right)\psi\xrightarrow{s}F^{\pm}\psi\\
\text{ as t}\rightarrow\pm\infty
\end{align*}
 Moreover, since for all $\left|t\right|>e^{\epsilon^{-1/4}}$ the
Hamiltonian is time independent and satisfies the usual scattering
and decay estimates, it follows that $F^{\pm}\equiv0$.\end{proof}

\section*{Acknowledgments}

This work was partly supported by the Israel Science Foundation (ISF
- 1028), by the US-Israel Binational Science Foundation (BSF -2010132),
by the USA National Science Foundation (NSF DMS 1201394)and by the
Shlomo Kaplansky academic chair.

\end{document}